\newtheorem{theorem}{Theorem}[section]
\newtheorem{lemma}[theorem]{Lemma}
\theoremstyle{definition}
\newtheorem{remark}[theorem]{Remark}
\def\L2{L^{2}}
\def\M{\mathcal{M}}
\def\E{\mathcal{E}}
\def\B{\mathcal{B}}
\def\D{\mathcal{D}}
\def\R{\Bbb{R}}
\def\m1{^{-1}}
\def\H{\mathcal{H}}
\def\F{\mathcal{F}}
\def\R{\mathbb{R}}
\def\o{\omega}
\def\O{\Omega}
\def\Q{\mathcal{Q}}
\begin{document}

\title[On the eigenvalue counting function...]{On the eigenvalue counting function for Schr\"odinger operator: some upper bounds.}%
\author{Fabio Cipriani}%
\address{Dipartimento di Matematica, Politecnico di Milano, piazza Leonardo da Vinci 32, 20133 Milano, Italy.}%
\email{fabio.cipriani@polimi.it}%
\thanks{}
\subjclass{81Q10}
\keywords{Schr\"odinger operators, eigenvalues counting function, Dirichlet-to-Neumann operator}%

\date{December 10, 2018}
\dedicatory{}
\footnote{This work has been supported by Laboratoire Ypatia des Sciences Mathématiques C.N.R.S. France - Laboratorio Ypatia delle Scienze Matematiche I.N.D.A.M. Italy (LYSM)}
\begin{abstract}
The aim of this work is to provide an upper bound on the eigenvalues counting function $N(\R^n,-\Delta+V,e)$ of a Sch\"odinger operator $-\Delta +V$ on $\R^n$ corresponding to a potential $V\in L^{\frac{n}{2}+\varepsilon}(\R^n,dx)$, in terms of the sum of the eigenvalues counting function of the Dirichlet integral $\D$ with Dirichlet boundary conditions on the subpotential domain $\{V< e\}$, endowed with weighted Lebesgue measure $(V-e)_-\cdot dx$ and the eigenvalues counting function of the absorption-to-reflection operator on the equipotential surface $\{V=e\}$.
\end{abstract}
\maketitle
\section{Introduction and description of the main results}
To describe the content of the present work, we recall an iconic result of H. Weyl [18] concerning a problem posed by the physicist H.A. Lorenz and stimulated by problems arising in J. Jeans' radiation theory, about the asymptotic distribution of the eigenvalues $0<\lambda_1\le\cdots\le \lambda_k\le\cdots $ (repeated according their multiplicity) of the Laplace operator $-\Delta$ subject to Dirichlet conditions on the boundary $\partial\O$ of a bounded domain $\O\subset\R^n$:
\[
\lambda_k\sim C_n\cdot |\O|^{-2/n}\cdot k^{2/n}\qquad k\to +\infty\, .
\]
Hence, from the spectrum of $-\Delta$ geometric information can be extracted such as the volume $|\O|$ of the region. If $N(\O,\Delta,\mu)$ denote the number of eigenvalues, counted according their multiplicity, not exceeding the value $\mu>0$, then the Weyl's result follows from the estimate
\[
N(\O,-\Delta,\mu)\sim C_n\cdot |\O|\cdot \mu^{n/2}\qquad\mu\to +\infty
\]
just noticing that $N(\O,-\Delta,\lambda_k)=k$ for any integer $k\ge 1$. \\
Here $C_n:=(4\pi)^{-n/2}\Gamma(1+n/2)^{-1}=(2\pi)^{-n/2}\cdot \o_n$ is the so called classical constant, $\o_n$ being the volume of the unit ball in $\R^n$.
\par\noindent
G. P\'olya [15] proved that for domains tiling $\R^n$, the following equivalent one-side bounds
\begin{equation}
\lambda_k\ge C_n\cdot |\O|^{-2/n}\cdot k^{2/n}\, ,\quad k\ge 1\, ,\qquad N(\O,-\Delta,\mu)\le C_n\cdot |\O|\cdot \mu^{n/2}\, , \qquad \mu\ge 0
\end{equation}
and conjectured that these are true for all bounded domains. In this perspective, E.H. Lieb [13] proved the above inequalities where the classical constant $C_n$ is replaced by a strictly greater one $L_n>C_n$. Later, P. Li-S.T. Yau [11] obtained inequalities with the constants $\frac{2\pi n}{{\rm e}}$ and $\frac{nC_n}{n+2}$ which are worst than Lieb's ones but that both agree with the Weyl's asymptotic result in the sense that $C_n\sim L_n\sim \frac{2\pi n}{\rm e}\sim \frac{nC_n}{n+2}$.\\
While the works of G. P\'olya [15] were motivated by problems arising in continuous mechanics and in particular those of vibrating membranes, those of E.H. Lieb were motivated by problems in Quantum Mechanics. More specifically, by the problem to bound above the number $N(\R^n,-\Delta + V,\mu)$ of eigenvalues of a Schr\"odinger operator on $\R^n$ associated to a potential $V$. Lieb obtained, for potentials $V\in L^{n/2}(\R^n,dx)$ on $\R^n$ with $n\ge 3$, the upper bound
\begin{equation}
N(\R^n,-\Delta + V,\mu)\le L_n\cdot \int_{\R^n}(V-\mu)_-^{n/2}\cdot dx\qquad \mu\in\R
\end{equation}
from which, among other things, his one sided bound on $N(\O,-\Delta,\mu)$ follows. The bound for the eigenvalues of the Schr\"odinger operator are subtler than those for the Laplace operator. For example, the former are definitely not true in low dimension $n=1,2$. The bound (1.2) is referred as {\it semiclassical} because the integral appearing in (1.2) is proportional by $\o_n/n$ to the volume of the region $\{(p,q)\in\R^n\times\R^n:|p|^2+V(q)\le\mu\}$ in the classical phase space. The semiclassical bound was obtained independently (with different method) and published almost simultaneously by M. Cwikel and G.V. Rosenbljum  (with constants worst than $L_n$) and it is often referred as the Cwikel-Lieb-Rosenbljum bound (see [13] for details). In particular M. Cwikel exploited ideas introduced by B. Simon [16] who previously proved an inequality of the form
\begin{equation}
N(\R^n,-\Delta + V,0)\le S_{n,\varepsilon}\cdot\Bigl(\|V_-\|_{n/2+\varepsilon} + \|V_+\|_{n/2+\varepsilon}\Bigr)^{n/2}\, .
\end{equation}
for potentials $V\in L^{n/2+\varepsilon}(\R^n,dx)$, with $S_{n,\varepsilon}\to+\infty$ as $\varepsilon\to 0$. The method followed by [13] is based on a reduction argument leading to a Birman-Schwinger compact operator [3], [17] followed by a Wiener integral representation of its trace.
\vskip0.2truecm
In this work the method we follow to bound above $N(\O,-\Delta+V,e)$ for $\O\subseteq\R^n$ with $n\ge 3$, is based not directly on considerations of self-adjoint, semibounded operators but rather on properties of their corresponding quadratic forms, often Dirichlet forms.
\vskip0.1truecm\noindent
In Section 2 we reduce the problem to bound above $N(\O,-\Delta+V,e)$ to the one to bound above the number $N(\D,H^1(U_e,m_e),1)$ of eigenvalues not exceeding the level $1$ of the operator corresponding to the Dirichlet integral $\D$ on the space $L^2(U_e,m_e)$ where $U_e:=\{V<e\}$ is the sublevel set of the potential $V$ and the background reference measure $m_e:=(V-e)_-\cdot dx$ is the Lebesgue one weighted by the potential.
\vskip0.1truecm\noindent
In Section 3 we study a family of quadratic forms $(\E_\lambda,\F)$ on the boundary space $L^2(\partial U_e,\mu_e)$ where $\partial U_e:=\{V=e\}$ is the level set of the potential $V$ and $\mu_e$ is the measure on $\partial U_e$ obtained averaging by $m_e$ the family of harmonic measures of $\partial U_e$. These forms are defined as traces, in the Sobolev or Dirichlet forms sense, of quadratic forms on $H^1(U_e,m_e)$ associated to the subspaces of $\lambda$-harmonic functions of finite energy.\\
Forms in this family are termed {\it absorption-to-reflection} quadratic forms to suggest that they are generalization of those associated to the Dirichlet-to-Neumann operators of smooth Euclidean domains [1], [2], [8].\\
In particular we show that $\E_\lambda$ is bounded below by the Dirichlet form $\E_0$ up to a constant multiple, depending on $\lambda$, of $\|\cdot\|^2_{L^2(\partial U_e,\mu_e)}$.
\vskip0.1truecm\noindent
In Section 4 we first prove, for $\lambda\ge 0$ in the resolvent set of $(\D_{U_e},H^1_0(U_e,m_e))$, the splitting
\[
N((\D_{U_e},H^1(U_e,m_e)),\lambda) =N((\D_{U_e},H^1_0(U_e,m_e)),\lambda) + N((\E_\lambda,\F),0)
\]
in terms of the counting function of the Dirichlet integral on the weighted Sobolev subspace $H^1_0(U_e,m_e)$ corresponding to Dirichlet boundary conditions on $\partial U_e$ plus the number of nonpositive eigenvalues of the absorption-to reflection quadratic form $(\E_\lambda,\F)$.\\
The splitting above generalizes the one obtained by L. Friedlander [9] in the proof of the Payne conjecture [14] about Dirichlet and Neumann eigenvalues of Euclideans domains.
\vskip0.1truecm\noindent
Subsequently, in the same Section 4, we show that $N((\E_\lambda,\F),0)$ is bounded above by the eigenvalues counting number $N((\E_0,\F),\|\lambda\cdot A_\lambda\|)$. Here $A_\lambda:=-\Delta (-\Delta-\lambda)^{-1}$ where $-\Delta$ is the operator whose quadratic form is the Dirichlet form with Dirichlet boundary conditions $(\D,H^1_0(U_e,m_e))$.
\vskip0.1truecm\noindent
The final Section 5 is devoted to obtain Weyl upper bounds on the counting functions above.
In Section 5.1 we obtain the upper bound for $\lambda\ge 0$
\[
N((\D_{U_e},H^1_0(U_e,m_e)),\lambda)\le {\rm e}^{2d}S_n^d\cdot \|(V-e)_-\|_{L^1(U_e,dx)}^2\cdot \|(V-e)_-\|^{d-2}_{L^p(U_e,dx)}\cdot\lambda^{d}
\]
for a suitable effective dimension $d$ depending upon $p$.\\
In Section 5.2 we obtain, for a suitable effective dimension $m$, the upper bound
\[
N((\E_0,\F),\gamma)\le {\rm e}^{2m}\cdot \|(V-e)_-\|^2_{L^1(U_e,dx)}\cdot(c_1\gamma +c_2)^m\qquad \gamma\ge 0\, ,
\]
assuming that $\partial U_e$ is smooth. Here the coefficients $c_1\, ,c_2$ depend upon some $L^p$-norms of the Radon-Nikodym derivative of the boundary measure $\mu_e$ with respect to the Hausdorff $(n-1)$-dimensional measure of $\partial U_e$.
\vskip0.1truecm\noindent
In Section 5, the method to bound above $N((\D_{U_e},H^1_0(U_e,m_e))$ and $N((\E_0,\F),\gamma)$ is essentially the same: we start from the classical Sobolev inequalities on $U_e$ or from the Sobolev trace inequalities in the $\partial U_e$ case, then we prove Sobolev inequalities with respect to the measures $m_e$ on $U_e$ or $\mu_e$ on $\partial U_e$. Then we use the Davies-Simon [7], [6] theory of ultracontractivity to convert these informations into uniform boundedness of heat kernels, then into bounds on the trace of the corresponding Markov semigroups and finally into bounds on the eigenvalues counting functions.
\vskip0.5truecm\noindent
{\bf Warning}: in the rest of the work an italic style letter {\it "e"} will continue to mean a fixed level of the potential function $V$, while a roman style letter "e" will represent the Neper number.

\section{Schr\"odinger and Dirichlet energy integrals and comparison of their eigenvalues counting functions}
In the following, when $(\E,\F)$ is a lower semibounded, closed quadratic form on a Hilbert space $\H$, we shall denote by $N((\E,\F),\beta)$ the number of  eigenvalues, counted according to their multiplicity, of the corresponding lower semibounded, self-adjoint operator $(L,D(L))$ on $\H$ which do not exceed the value $\beta\in\R$. In other words, denoting by $E^L$ the spectral measure of $(L,D(L))$, we define
\[
N((\E,\F),\beta):={\rm Tr}(E^L({(-\infty,\beta]}))
\]
as the trace of the spectral projection corresponding to the interval $(-\infty,\beta]$.
\vskip0.2truecm\noindent
We shall denote by $dx$ the Lebesgue measure of $\R^n$ and by ${\rm BL}(\O)$ the space of Beppo Levi functions (see [4])
\[
{\rm BL}(\O):=\{u\in L^2_{\rm loc}(\O,dx):|\nabla u|\in L^2(\O,dx)\}\, .
\]
Whenever $\O\subseteq\R^n$ is an open set endowed with positive Radon measure $m$, $(\D_\O,H^1(\O,m))$ will denote the Dirichlet integral
\[
\D_\O[u]:=\int_\O |\nabla u(x)|^2\cdot dx
\]
defined on the space $H^1(\O,m):={\rm BL}(\O)\cap L^2(\O,m)$.\\
We shall denote by $(\D_\O,H^1_0(\O,m))$ the Dirichlet integral considered on the subspace $H^1_0(\O,m)$ obtained as the closure of $H^1(\O,m)\cap C(\O)$ in the graph norm of $H^1(\O,m)$.\\
When $m=dx$ the form $(\D_\O,H^1(\O,dx))$ is closed on $L^2(\O,dx)$. Moreover, $C^\infty_c(\O)$ is a form core for $(\D_D,H^1_0(\O,dx))$ and the corresponding nonnegative, self-adjoint operator is the Laplacian $-\Delta$ subject to Dirichlet boundary conditions.
\vskip0.2truecm\noindent
Assume $V$ to be a {\it negative, upper semicontinuous potential in the Kato class}
\[
V=-V_-\in K_n(\O,dx)
\]
together with the (self-adjoint, lower semibounded) Schr\"odinger operator $-\Delta+V$ whose (closed, symmetric, lower semibounded) quadratic form is given by
\[
\Q_V[u]:=\D_\O[u]+\int_\O |u|^2 V\, dx\qquad u\in H^1_0(\O,dx)\, .
\]
For the background material on Schr\"odinger operators we refer to [5]. Fix a nonpositive energy level $e\le 0$ and consider the open sublevel set
\[
U_e:=\{x\in\O: V(x)< e\}
\]
of the potential energy, endowed with the weighted Lebesgue measure $m_e(dx):=(V-e)_-\, dx$. Since $V_-$ is assumed to lies in the Kato class, $H^1_0(\O,dx)$ can be considered as a subspace of $H^1(U_e,m_e)$.
\begin{lemma}
The Dirichlet integral
\[
\D_{U_e}[u]:=\int_{U_e}|\nabla u|^2\, dx\qquad u\in H^1(U_e,m_e)
\]
is a Dirichlet form on $L^2(U_e,m_e)$.
\end{lemma}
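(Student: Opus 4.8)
Recall that $(\D_{U_e},H^1(U_e,m_e))$ is a Dirichlet form on $L^2(U_e,m_e)$ precisely when it is a densely defined, closed, symmetric, nonnegative quadratic form which is Markovian, in the sense that $v:=(0\vee u)\wedge 1$ lies in $H^1(U_e,m_e)$ with $\D_{U_e}[v]\le\D_{U_e}[u]$ for every $u\in H^1(U_e,m_e)$. Symmetry and nonnegativity are immediate from $\D_{U_e}[u]=\int_{U_e}|\nabla u|^2\,dx\ge 0$. For the density I would note that $m_e=(V-e)_-\cdot dx$ is a positive Radon measure on the open set $U_e$, its density there being the strictly positive function $e-V$, which is locally integrable because $V_-$ belongs to the Kato class $K_n(\O,dx)\subseteq L^1_{\rm loc}(\O,dx)$. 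Consequently $C^\infty_c(U_e)\subseteq{\rm BL}(U_e)\cap L^2(U_e,m_e)=H^1(U_e,m_e)$ and, $m_e$ being Radon, $C^\infty_c(U_e)$ is dense in $L^2(U_e,m_e)$.

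The substantive point is closedness, which I expect to be the main obstacle. The plan is first to record a local comparison of measures: for every compact $K\subseteq U_e$ one has $m_e\ge c_K\cdot dx$ on $K$, where $c_K:=\inf_K(e-V)>0$. Indeed $e-V$ is lower semicontinuous on $\O$ because $V$ is upper semicontinuous, so it attains a minimum on $K$, and that minimum is strictly positive since $K\subseteq U_e=\{V<e\}$; this is precisely where the upper semicontinuity hypothesis on $V$ is used essentially. The comparison makes the inclusion $L^2(U_e,m_e)\hookrightarrow L^2_{\rm loc}(U_e,dx)$ continuous. Then, if $(u_k)$ is Cauchy for the graph norm $u\mapsto\bigl(\D_{U_e}[u]+\|u\|^2_{L^2(U_e,m_e)}\bigr)^{1/2}$, there are $u$ with $u_k\to u$ in $L^2(U_e,m_e)$, hence in $L^2_{\rm loc}(U_e,dx)$, and a vector field $g$ with $\nabla u_k\to g$ in $L^2(U_e,dx)$; passing to the limit in $\int_{U_e}\nabla u_k\cdot\varphi\,dx=-\int_{U_e}u_k\,{\rm div}\,\varphi\,dx$ for $\varphi\in C^\infty_c(U_e;\R^n)$ identifies $g$ with the distributional gradient of $u$. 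Thus $u\in{\rm BL}(U_e)\cap L^2(U_e,m_e)=H^1(U_e,m_e)$ and $u_k\to u$ in the graph norm, so the form is closed.

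It then remains to verify the Markovian property. Given $u\in H^1(U_e,m_e)$, put $v:=(0\vee u)\wedge 1$. Since $|v|\le|u|$ pointwise, $v\in L^2(U_e,m_e)$ --- this domination is the only delicate bookkeeping, as $m_e(U_e)$ may be infinite and constants need not belong to $L^2(U_e,m_e)$. Because $F(t):=(0\vee t)\wedge 1$ is $1$-Lipschitz with $F(0)=0$, the chain rule for Beppo Levi functions yields $v\in{\rm BL}(U_e)$ with $\nabla v=\mathbf{1}_{\{0<u<1\}}\,\nabla u$ almost everywhere, so $|\nabla v|\le|\nabla u|$ and $\D_{U_e}[v]\le\D_{U_e}[u]$. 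Hence $v\in H^1(U_e,m_e)$ with $\D_{U_e}[v]\le\D_{U_e}[u]$, which is the Markovian property. In summary, all of the verifications except closedness are soft; the crux is the local bound $m_e\ge c_K\,dx$ on compact subsets of $U_e$, which converts weighted $L^2$-convergence into local $L^2$-convergence with respect to Lebesgue measure and thereby forces the distributional gradient of a graph-norm limit to be the limit of the gradients.
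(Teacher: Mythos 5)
Your proof is correct, but it takes a genuinely different and more self-contained route to closedness than the paper does. The paper's proof invokes, as a black box, a structure theorem for $\D_{U_e}$-Cauchy sequences in the Beppo Levi space ${\rm BL}(U_e)$ (a Deny--Lions-type fact: such a sequence converges in $L^2_{\rm loc}$ and in energy after subtracting a sequence of constants), and then spends most of the argument identifying those additive constants by comparing $m_e$-a.e.\ and $dx$-a.e.\ limits, using only the \emph{equivalence} of $m_e$ and $dx$ on $U_e$. You instead bypass the Beppo Levi machinery entirely: you upgrade equivalence to the quantitative local comparison $m_e \ge c_K\,dx$ on each compact $K\subset U_e$ (which is where the upper semicontinuity of $V$ is genuinely used, whereas the paper only uses $e-V>0$ a.e.), deduce the continuous embedding $L^2(U_e,m_e)\hookrightarrow L^2_{\rm loc}(U_e,dx)$, and then close the form by the standard distributional identification of the limit of the gradients. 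This is more elementary and avoids any issues about constants being different on distinct connected components of $U_e$. You also flesh out the Markov property, which the paper dismisses as ``clear''; your observation that $|(0\vee u)\wedge 1|\le|u|$ is the right way to get membership in $L^2(U_e,m_e)$ when $m_e(U_e)$ may be infinite, and the Lipschitz chain-rule argument in ${\rm BL}(U_e)$ is what the paper has in mind. What the paper's approach buys is brevity at the price of citing BL-space theory; what yours buys is a self-contained verification using a slightly sharper local measure bound.
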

\begin{proof}
Since the form is clearly Markovian, we have just to prove that it is closed. Suppose that $u_n\in H^1(U_e,m_e)$ is a $\D_{U_e}$-Cauchy sequence converging to some $u\in L^2(U_e,m_e)$ in the norm of $L^2(U_e,m_e)$. Then, possibly passing to a subsequence, we have that, $m_e$-a.e. on $U_e$, $u_n\to u$. Since $m_e$ and $dx$ are equivalent on $U_e$, we have also that, $dx$-a.e. on $U_e$, $u_n\to u$. Since $H^1(u_e,m_e)\subset {\rm BL}(U_e)$, by the properties of the $\D_{U_e}$-convergence in ${\rm BL}(U_e)$, there exists a sequence of constants $c_n$ and $v\in {\rm BL}(U_e)$ such that $\D[u_n -v]\to 0$ and $u_n +c_n\to v$ in $L^2_{\rm loc}(U_e,m_e)$. Then, possibly passing to a subsequence, we have that, $dx$-a.e. on $U_e$, $u_n+c_n\to v$. Hence, $c_n=(u_n +c_n)-u_n\to v-u$, $dx$-a.e. on $U_e$. On the other hand, the limit of a sequence of constant which converges $dx$-a.e. on $U_e$ can only be a constant function $c$ on $U_e$, so that $c=v-u$, $dx$-a.e. on $U_e$. Since $\D[u_n-u]=\D[u_n-v +c])=\D[u_n -v]\to 0$, we have that $u_n$ converges to $u$ in the form norm of $H^1(U_e,m_e)$.
\end{proof}
The following observation, appearing in [11 Corollary 2], is a reformulation of the {\it reduction argument} of Birman and Schwinger which was also employed by [13]. \\
While the Birman-Schwinger reduction identifies the number $N((\Q_V,H^1_0(\O,dx)),e)$ of eigenvalues of the Schr\"odinger operator
$-\Delta-V_-$ on $L^2(\O,dx)$, not exceeding the value $e\le 0$, with the number of eigenvalues of the Birman-Schwinger compact operator (associated to the Birman-Schwinger kernel) greater or equal to $1$, the following elementary observation compares $N((\Q_V,H^1_0(\O,dx)),e)$ with the number $N((\D_{U_e},H^1(U_e,m_e)),1)$ of eigenvalues of the Dirichlet integral $(\D_{U_e},H^1(U_e,m_e))$ on $L^2(U_e,m_e)$, not exceeding the value $1$.
\begin{lemma}
For all $\lambda \ge 1$, we then have
\[
N((\Q_V,H^1_0(\O,dx)),e) \le N((\D_{U_e},H^1(U_e,m_e)),\lambda)\, .
\]
\end{lemma}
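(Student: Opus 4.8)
The plan is to read both counting functions through the variational (Glazman / Courant--Fischer) characterization and to observe that the hypothesis $\Q_V[u]\le e\,\norm{u}_{L^2(\O,dx)}^2$ is already \emph{stronger} than the inequality $\D_{U_e}[u]\le\norm{u}_{L^2(U_e,m_e)}^2$. Recall that for any closed, lower semibounded quadratic form $(\E,\F)$ on a Hilbert space $\H$, with associated operator $L$,
\[
N((\E,\F),\beta)=\sup\,\set{\dim M\,:\, M\subseteq\F \text{ a linear subspace with } \E[u]\le\beta\,\norm{u}_{\H}^2 \ \forall\,u\in M},
\]
both sides allowed to equal $+\infty$: one inequality comes from taking $M=\mathrm{Ran}\,E^L((-\infty,\beta])$, the reverse from the spectral theorem applied on $\mathrm{Ran}\,E^L((\beta,+\infty))$, where the form is strictly larger than $\beta\,\norm{\cdot}^2$.

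Applying this to $(\Q_V,H^1_0(\O,dx))$ on $L^2(\O,dx)$ at the level $e$, it suffices to take an arbitrary finite-dimensional subspace $M\subseteq H^1_0(\O,dx)$ on which $\Q_V[u]\le e\,\norm{u}_{L^2(\O,dx)}^2$, and to produce inside $H^1(U_e,m_e)$ a subspace of the same dimension on which $\D_{U_e}\le\lambda\,\norm{\cdot}_{L^2(U_e,m_e)}^2$; since $(\D_{U_e},H^1(U_e,m_e))$ is a Dirichlet form (hence closed) on $L^2(U_e,m_e)$ by the preceding Lemma, the variational formula applies on that side as well. Because $V_-\in K_n(\O,dx)$, restriction to $U_e$ maps $H^1_0(\O,dx)$ into $H^1(U_e,m_e)$; denote it $Ru:=u|_{U_e}$. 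For $u\in M$ one computes
\[
\D_{U_e}[Ru]=\int_{U_e}|\nabla u|^2\,dx\le\D_\O[u]=\Q_V[u]-\int_\O|u|^2V\,dx\le\int_\O|u|^2(e-V)\,dx=\int_{U_e}|u|^2(e-V)\,dx+\int_{\O\setminus U_e}|u|^2(e-V)\,dx .
\]
On $\O\setminus U_e$ one has $V\ge e$, hence $e-V\le 0$, so the second integral is $\le 0$; on $U_e$ one has $e-V=(V-e)_-$, so the first integral equals $\int_{U_e}|u|^2\,dm_e=\norm{Ru}_{L^2(U_e,m_e)}^2$. Therefore $\D_{U_e}[Ru]\le\norm{Ru}_{L^2(U_e,m_e)}^2\le\lambda\,\norm{Ru}_{L^2(U_e,m_e)}^2$ for every $\lambda\ge 1$.

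It remains to see that $R$ preserves dimension on $M$, i.e.\ that $R|_M$ is injective. If $u\in M$ and $Ru=0$, the displayed chain forces $\D_\O[u]\le 0$, hence $\nabla u=0$ a.e.\ on $\O$, and so $u=0$ in $H^1_0(\O,dx)$. Consequently $R(M)\subseteq H^1(U_e,m_e)$ has dimension $\dim M$ and satisfies $\D_{U_e}\le\lambda\,\norm{\cdot}_{L^2(U_e,m_e)}^2$ on it, which gives $\dim M\le N((\D_{U_e},H^1(U_e,m_e)),\lambda)$. Taking the supremum over all admissible $M$ yields the claim; and if $N((\Q_V,H^1_0(\O,dx)),e)=+\infty$, the same construction supplies arbitrarily high-dimensional admissible subspaces on the right as well.

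I expect the only step requiring a word of justification to be the injectivity of $R$: the subspace $M$ is furnished abstractly by the variational principle and need not be spanned by eigenfunctions, so one cannot appeal to unique continuation and must instead extract $\nabla u=0$ directly from the form inequality as above and use that $\D_\O$ is a norm on $H^1_0(\O,dx)$. Everything else is bookkeeping with the sign of $e-V$ on $U_e$ versus $\O\setminus U_e$, together with $\lambda\ge 1$.
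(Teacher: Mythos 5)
Your proof is correct and follows essentially the same route as the paper: you establish the pointwise comparison (your chain rearranges to the paper's inequality $\Q_V[u]-e\|u\|^2_{L^2(\O,dx)}\ge\D_{U_e}[u]-\lambda\|u\|^2_{L^2(U_e,m_e)}$) and then invoke the variational / Min--Max characterization of the counting function. The one thing you add that the paper leaves implicit is the injectivity of the restriction map on a test subspace $M$ (the paper just says the subspace "is contained in (or it can be identified by restriction with)" the corresponding subspace of $H^1(U_e,m_e)$); your observation that $Ru=0$ forces $\D_\O[u]\le 0$, hence $u=0$ in $H^1_0(\O,dx)$, is a clean way to justify that dimensions are preserved.
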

\begin{proof}
Since for all $\lambda\ge 1$ and all $u\in H^1_0(\O,dx)$ we have
\[
\begin{split}
\Q_{V}[u]-e\|u\|^2_{L^2(\O,dx)}&=\D_\O[u]+\int_{\O}|u|^2\cdot (V-e)\, dx \\
&=\D_\O[u]+\int_{\R^n}|u|^2\cdot (V-e)_+\, dx - \int_{\O}|u|^2\cdot (V-e)_-\, dx \\
&\ge \D_{U_e}[u] - \int_{U_e}|u|^2\, dm_e \\
&\ge \D_{U_e}[u] - \lambda\int_{U_e}|u|^2\, dm_e\, ,
\end{split}
\]
the subspace of $H^1_0(\O,dx)$ where the quadratic form $\Q_V$ is bounded by $e$ with respect to the norm of $L^2(\O,m)$ is contained in (or it can be identified by restriction with) the subspace of $H^1(U_e,m_e)$ where the Dirichlet integral $\D_{U_e}$ is bounded by $1$ with respect to the norm of $L^2(U_e,\mu_e)$. The result then follows by the Min-Max Theorem.
\end{proof}
\begin{remark}
The above result can be restated saying that the number of bound states of a quantum particle subject to a potential $V$, whose energy does not exceed the level $e\in \R$, is less or equal the number of bound states of energy not exceeding the level $1$ of a free particle moving in a background where the reference measure $m_e$ is the Lebesgue one weighted by the potential $(V-e)_-$.\\
It can be considered as a quantum version of the Jacobi trick by which the orbits of a classical particle moving under the influence of a potential $V$ are geodesics of the Jacobi (conformally equivalent) metric.
\end{remark}
\section{absorption-to-Reflection quadratic forms and operators}
The goal of the present section is to compare, in a natural way, the eigenvalues distribution of the Dirichlet integral $D_{U_e}$ when considered on the space $H^1(U_e,m_e)$ to the eigenvalues distribution of the Dirichlet integral $D_{U_e}$ when considered the space $H^1_0(U_e,m_e)$, through the eigenvalues distributions of a family of operators on the boundary $\partial U_e$. These operators, which from the point of view of Dirichlet forms theory may be called {\it absorption-to-reflection} operators, generalize the Dirichlet-to-Neumann operators on the boundary $\partial\O$ of smooth Euclidean domains $\O$, well studied in literature (see [1], [2], [8]). The difference lies in the fact that instead of starting from the Sobolev space $H^1(\O,dx)$ and its subspace $H^1_0(\O,dx)$ we start from $H^1(\O,m)$ and $H^1_0(\O,m)$, for a positive Radon measure $m$ on $\O$ and that the {\it absorption-to-reflection} operators on the  boundary are closed with respect to a measure on $\partial\O$ depending on $m$ and no more with respect to the Hausdorff $(n-1)$-dimensional measure.
\vskip0.2truecm\noindent
For $\lambda\in\mathbb{R}$ let us consider the space of {\it finite energy, $\lambda$-harmonic functions}
\[
\mathcal{H}_\lambda:=\{u\in H^1(U_e,m_e):\D_{U_e}(v|u)-\lambda(v|u)_{L^2(U_e,m_e)}=0,\,\, v\in H^1_0(U_e,m_e)\}\, .
\]
\begin{lemma}
Let us consider the quadratic form $(\D_{U_e},H^1_0(U_e,m_e))$ on $L^2(U_e,m_e)$. Then for any value $\lambda\notin \sigma(\D_{U_e},H^1_0(U_e,m_e))$, the following direct splitting holds true
\[
H^1(U_e,m_e)=H^1_0(U_e,m_e)\oplus \mathcal{H}_\lambda\, .
\]
\end{lemma}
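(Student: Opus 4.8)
The plan is to view $\mathcal H_\lambda$ as the orthogonal complement of $H^1_0(U_e,m_e)$ inside $H^1(U_e,m_e)$ relative to the (generally indefinite, once $\lambda>0$) bounded bilinear form $a_\lambda(u,v):=\D_{U_e}(u|v)-\lambda(u|v)_{L^2(U_e,m_e)}$; this is the infinite‑dimensional analogue of the fact that the boundary value problem ``$\Delta u=\lambda u$ in $U_e$, $u=f$ on $\partial U_e$'' is uniquely solvable exactly when $\lambda$ avoids the Dirichlet spectrum. Concretely I would reduce the assertion to the single statement: \emph{for every $w\in H^1(U_e,m_e)$ there is a unique $w_0\in H^1_0(U_e,m_e)$ with $w-w_0\in\mathcal H_\lambda$.} Existence yields $H^1(U_e,m_e)=H^1_0(U_e,m_e)+\mathcal H_\lambda$, and applying the uniqueness to $w=0$ (for which $w_0=0$ obviously works) forces $H^1_0(U_e,m_e)\cap\mathcal H_\lambda=\{0\}$, so the sum is direct; boundedness of the two resulting projections is then automatic from uniqueness and the open mapping theorem.

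By the very definition of $\mathcal H_\lambda$, the requirement $w-w_0\in\mathcal H_\lambda$ is the variational equation $a_\lambda(w_0,v)=a_\lambda(w,v)$ for all $v\in H^1_0(U_e,m_e)$. Since $\D_{U_e}$ is closed on $H^1(U_e,m_e)$ (Lemma 2.1), hence on its graph‑closed, $L^2(U_e,m_e)$‑dense subspace $H^1_0(U_e,m_e)$, and since $|\D_{U_e}(u|v)|$ and $|(u|v)_{L^2(U_e,m_e)}|$ are both dominated by the form norm, $a_\lambda$ is bounded on $H^1(U_e,m_e)\times H^1(U_e,m_e)$; in particular $v\mapsto a_\lambda(w,v)$ lies in the dual of $H^1_0(U_e,m_e)$ and $a_\lambda$ induces a bounded operator $A_\lambda\colon H^1_0(U_e,m_e)\to H^1_0(U_e,m_e)^{*}$. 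Everything then reduces to showing $A_\lambda$ is invertible.

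Writing $L_0$ for the self‑adjoint operator on $L^2(U_e,m_e)$ attached to $(\D_{U_e},H^1_0(U_e,m_e))$, I would prove invertibility of $A_\lambda$ directly from $\lambda\in\rho(L_0)$. Injectivity is immediate: $A_\lambda u=0$ means $\D_{U_e}(u|v)=\lambda(u|v)_{L^2(U_e,m_e)}$ for all $v\in H^1_0(U_e,m_e)$, i.e. $u\in D(L_0)$ and $L_0u=\lambda u$, whence $u=0$. For surjectivity I would use the shift $a_\lambda(u,v)=b(u,v)-(1+\lambda)(u|v)_{L^2(U_e,m_e)}$, where $b(u,v):=\D_{U_e}(u|v)+(u|v)_{L^2(U_e,m_e)}$ is the coercive symmetric form inner product: given $F\in H^1_0(U_e,m_e)^{*}$, Riesz's theorem gives $\phi\in H^1_0(U_e,m_e)$ with $b(\phi,\cdot)=F$, and the Ansatz $w_0=\phi+\psi$ turns $a_\lambda(w_0,\cdot)=F$ into $a_\lambda(\psi,\cdot)=(1+\lambda)(\phi|\cdot)_{L^2(U_e,m_e)}$, which is solved by $\psi:=(L_0-\lambda)^{-1}\big((1+\lambda)\phi\big)\in D(L_0)\subseteq H^1_0(U_e,m_e)$ precisely because $\lambda$ is in the resolvent set. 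Hence $A_\lambda$ is a bounded bijection, so an isomorphism, and $w_0$ exists and is unique.

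The main obstacle is conceptual rather than computational: for $\lambda>0$ the form $a_\lambda$ is genuinely indefinite on $H^1_0(U_e,m_e)$, so neither a naive orthogonal projection nor a direct Lax--Milgram argument applies, and the resolvent $(L_0-\lambda)^{-1}$ is exactly the device absorbing the indefinite part. Equivalently, conjugating through the Riesz isomorphism of $b$ identifies $A_\lambda$ with $I-(1+\lambda)G$ on $H^1_0(U_e,m_e)$, where $G$ is the nonnegative contraction determined by $b(Gu,v)=(u|v)_{L^2(U_e,m_e)}$, whose nonzero spectral values are the numbers $(1+\mu)^{-1}$ with $\mu\in\sigma(L_0)$; thus $\tfrac1{1+\lambda}\notin\sigma(G)$ iff $\lambda\notin\sigma(L_0)$, which is the cleanest way to see why the hypothesis is sharp. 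Minor points to dispatch in passing are that $H^1_0(U_e,m_e)$ is closed in the $H^1$‑norm and dense in $L^2(U_e,m_e)$ (so $L_0$ and its spectrum are well defined) and the harmless borderline value $\lambda=-1$, for which $A_\lambda=b$ and the splitting is literally the orthogonal decomposition.
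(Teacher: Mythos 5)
The paper states this lemma without proof, so there is nothing to compare yours against directly; what I can say is that your proposal is a complete and correct argument, and it is the standard functional-analytic route (essentially the abstract version of the splitting Friedlander uses for the Dirichlet-to-Neumann comparison that the paper cites). Reducing the decomposition to unique solvability of $a_\lambda(w_0,v)=a_\lambda(w,v)$ for $v\in H^1_0(U_e,m_e)$, identifying the induced bounded operator $A_\lambda\colon H^1_0(U_e,m_e)\to H^1_0(U_e,m_e)^{*}$, and showing it is a bijection exactly when $\lambda\notin\sigma(L_0)$ is the right plan; your injectivity step (an eigenfunction of $L_0$ at $\lambda$ vanishes) and your surjectivity step (shift $a_\lambda=b-(1+\lambda)(\cdot|\cdot)_{L^2}$, Riesz for the coercive form $b$, then $\psi=(L_0-\lambda)^{-1}\bigl((1+\lambda)\phi\bigr)$) are both clean and correct, and the concluding identification of $A_\lambda$ with $I-(1+\lambda)G$ explains why the resolvent-set hypothesis is exactly what is needed. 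One small point to make explicit in a write-up: the defining relation of $\mathcal{H}_\lambda$ puts the test function in the first slot, so for $u\in H^1_0(U_e,m_e)\cap\mathcal{H}_\lambda$ you must invoke the symmetry of $a_\lambda$ to read the membership in $\mathcal{H}_\lambda$ as $A_\lambda u=0$ before appealing to injectivity; this is implicit in your argument and should be said.
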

\noindent
Recall that the extended Dirichlet space $H^1(U_e,m_e)_e$ is the space of measurable functions on $U_e$ which are $m_e$-a.e. pointwise limits of
$\D_{U_e}$-Cauchy sequences of the Dirichlet space $H^1(U_e,m_e)$. If $V$ is continuous and $m_e$ is finite, then $H^1(U_e,m_e)_e$ reduces to the space ${\rm BL}(U_e)$ of Beppo Levi functions ([4 Theorem 2.2.14]).
\par\noindent
Let us consider the level set $\partial U_e=\{V=e\}\subset\O$ of the potential energy, the function space
\[
\F_{\partial U_e}:=\{u|\partial U_e:u\in H^1(U_e,m_e)_e\}
\]
and the trace operator
\[
{\rm Tr}:H^1(U_e,m_e)\to \F_{\partial U_e}\qquad {\rm Tr}(u):=u|\partial U_e\, .
\]
Since $ker({\rm Tr})=H^1_0(U_e,m_e)$, the previous splitting provides the following
\begin{lemma}
For any $\lambda\notin \sigma(\D_{U_e},H^1_0(U_e,m_e))$ we have
\[
{\rm Tr}(\mathcal{H}_\lambda)={\rm Tr}(H^1(U_e,m_e))
\]
and the trace operator is a linear isomorphism between $\mathcal{H}_\lambda$ and $\F:={\rm Tr}(H^1(U_e,m_e))$.
\end{lemma}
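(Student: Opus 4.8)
The plan is to read everything off the direct decomposition of Lemma~3.2 together with the identity $\ker({\rm Tr})=H^1_0(U_e,m_e)$ recalled just before the statement. Fix $\lambda\notin\sigma(\D_{U_e},H^1_0(U_e,m_e))$, so that $H^1(U_e,m_e)=H^1_0(U_e,m_e)\oplus\mathcal{H}_\lambda$. Writing $u=u_0+u_\lambda$ accordingly for $u\in H^1(U_e,m_e)$ and using ${\rm Tr}(u_0)=0$, one gets ${\rm Tr}(u)={\rm Tr}(u_\lambda)$; hence ${\rm Tr}(\mathcal{H}_\lambda)={\rm Tr}(H^1(U_e,m_e))=:\F$, which is the first assertion and exhibits ${\rm Tr}|_{\mathcal{H}_\lambda}\colon\mathcal{H}_\lambda\to\F$ as onto. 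For injectivity, if $u\in\mathcal{H}_\lambda$ with ${\rm Tr}(u)=0$, then $u\in\mathcal{H}_\lambda\cap\ker({\rm Tr})=\mathcal{H}_\lambda\cap H^1_0(U_e,m_e)=\{0\}$ because the sum is direct. So ${\rm Tr}|_{\mathcal{H}_\lambda}$ is a linear bijection onto $\F$.

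To record that it is a \emph{topological} isomorphism --- which is what one needs in Section~4 in order to transport quadratic forms to $\F$ --- I would first note that $\mathcal{H}_\lambda$ is a closed subspace of the Hilbert space $\bigl(H^1(U_e,m_e),\D_{U_e}[\cdot]+\|\cdot\|^2_{L^2(U_e,m_e)}\bigr)$, being the intersection of the kernels of the bounded functionals $u\mapsto\D_{U_e}(v|u)-\lambda(v|u)_{L^2(U_e,m_e)}$ for $v\in H^1_0(U_e,m_e)$. Since $H^1_0(U_e,m_e)$ is closed as well and the two form an algebraic direct sum decomposition of a Banach space, the projection $P\colon H^1(U_e,m_e)\to\mathcal{H}_\lambda$ along $H^1_0(U_e,m_e)$ is bounded by the closed graph theorem. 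Equipping $\F$ with the quotient norm $\|\varphi\|_\F:=\inf\{\|u\|_{H^1(U_e,m_e)}:{\rm Tr}(u)=\varphi\}$ of $H^1(U_e,m_e)/H^1_0(U_e,m_e)$, one checks that $\varphi\mapsto Pu$ --- for any $u$ with ${\rm Tr}(u)=\varphi$, the choice being immaterial since two such $u$ differ by an element of $\ker({\rm Tr})=H^1_0(U_e,m_e)$ which $P$ kills --- is a bounded inverse of ${\rm Tr}|_{\mathcal{H}_\lambda}$, of norm at most $\|P\|$.

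The argument is essentially bookkeeping: all the analytic content, in particular the role of the hypothesis $\lambda\notin\sigma(\D_{U_e},H^1_0(U_e,m_e))$, has already been spent in proving Lemma~3.2, while the identification $\ker({\rm Tr})=H^1_0(U_e,m_e)$ rests on the trace theory for the extended Dirichlet space recalled just above. The only point requiring a little care, and the closest thing to an obstacle, is making sure that the splitting of Lemma~3.2 is a sum of \emph{closed} subspaces before invoking the closed graph theorem for continuity of the inverse; this is why I would spell out the closedness of $\mathcal{H}_\lambda$ as above rather than take it for granted.
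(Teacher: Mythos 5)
Your proposal is correct and follows exactly the route the paper has in mind: the paper gives no separate proof but simply prefaces the lemma with ``Since $\ker({\rm Tr})=H^1_0(U_e,m_e)$, the previous splitting provides the following,'' and your first paragraph is precisely the unpacking of that sentence. The second paragraph on topological isomorphism (closedness of $\mathcal{H}_\lambda$ and $H^1_0(U_e,m_e)$, boundedness of the projection, quotient norm on $\F$) is a sound addition that is not asked for by the statement, which only claims a linear isomorphism, but it is what is tacitly used later in Section~4 when quadratic forms are transported to $\F$; so it is a reasonable thing to record even if the paper leaves it implicit.
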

We introduce now the {\it weak solution operator} $L_\lambda$ of the Dirichlet problem associated to a Dirichlet space $H^1(U_e,m_e)$ on
$L^2(U_e,m_e)$.
\begin{lemma}
For $\lambda\notin \sigma(\D_{U_e},H^1_0(U_e,m_e))$ a linear operator $L_\lambda:\F\to \mathcal{H}_\lambda$ is defined assigning to $\varphi\in \F$ the unique $L_\lambda \varphi\in \mathcal{H}_\lambda$ such that ${\rm Tr}(L_\lambda\varphi)=\varphi$.\\
The function $L_\lambda\varphi\in H^1(U_e,m_e)$ is the unique minimizer of the quadratic functional
\[
\mathcal{L}_\lambda :H^1(U_e,m_e)\to [0,+\infty)\qquad \mathcal{L}_\lambda [u]:=\D_{u_e}[u]-\lambda \|u\|^2_{L^2(U_e,m_e)}
\]
on the set $C_\varphi :=\{u\in H^1(U_e,m_e):{\rm Tr}(u)=\varphi\}$.
\end{lemma}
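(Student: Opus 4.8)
The plan is to read off the existence, uniqueness and linearity of $L_\lambda$ directly from the preceding lemma, and then to establish the variational characterization by the classical orthogonal-decomposition (Pythagoras) argument, the ``orthogonality'' being precisely the one built into the definition of $\mathcal{H}_\lambda$. For the first assertion: since $\lambda\notin\sigma(\D_{U_e},H^1_0(U_e,m_e))$, the preceding lemma tells us that ${\rm Tr}$ restricts to a linear isomorphism of $\mathcal{H}_\lambda$ onto $\F$. One then simply sets $L_\lambda:=({\rm Tr}|_{\mathcal{H}_\lambda})^{-1}$; this is the unique linear map $\F\to\mathcal{H}_\lambda$ with ${\rm Tr}(L_\lambda\varphi)=\varphi$, and since $\mathcal{H}_\lambda\subseteq H^1(U_e,m_e)$ one has $L_\lambda\varphi\in C_\varphi$, so in particular $C_\varphi\neq\emptyset$.

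For the minimization property, fix $\varphi\in\F$ and an arbitrary $u\in C_\varphi$, and put $w:=u-L_\lambda\varphi$. From ${\rm Tr}(u)={\rm Tr}(L_\lambda\varphi)=\varphi$ together with $\ker{\rm Tr}=H^1_0(U_e,m_e)$ (recalled just above) we get $w\in H^1_0(U_e,m_e)$. Let $\mathcal{L}_\lambda(\cdot|\cdot)$ denote the symmetric bilinear form polarizing $\mathcal{L}_\lambda$, that is $\mathcal{L}_\lambda(a|b)=\D_{U_e}(a|b)-\lambda(a|b)_{L^2(U_e,m_e)}$. Since $L_\lambda\varphi\in\mathcal{H}_\lambda$ and $w\in H^1_0(U_e,m_e)$, the defining relation of $\mathcal{H}_\lambda$ gives $\mathcal{L}_\lambda(w|L_\lambda\varphi)=0$, hence
\[
\mathcal{L}_\lambda[u]=\mathcal{L}_\lambda[L_\lambda\varphi+w]=\mathcal{L}_\lambda[L_\lambda\varphi]+2\,\mathcal{L}_\lambda(L_\lambda\varphi|w)+\mathcal{L}_\lambda[w]=\mathcal{L}_\lambda[L_\lambda\varphi]+\mathcal{L}_\lambda[w]\, .
\]
As $\mathcal{L}_\lambda$ is $[0,+\infty)$-valued, and in particular nonnegative on $H^1_0(U_e,m_e)$, this identity yields $\mathcal{L}_\lambda[u]\ge\mathcal{L}_\lambda[L_\lambda\varphi]$, so $L_\lambda\varphi$ minimizes $\mathcal{L}_\lambda$ on $C_\varphi$. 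If $u$ is another minimizer then $\mathcal{L}_\lambda[w]=0$; and the nonnegativity of $\mathcal{L}_\lambda$ on $H^1_0(U_e,m_e)$ together with $\lambda\notin\sigma(\D_{U_e},H^1_0(U_e,m_e))$ forces, by the Rayleigh quotient characterization of the bottom of the spectrum, $\lambda<\inf\sigma(\D_{U_e},H^1_0(U_e,m_e))$, so that $\mathcal{L}_\lambda$ restricted to $H^1_0(U_e,m_e)$ is positive definite and equivalent to the form norm; therefore $w=0$, i.e. $u=L_\lambda\varphi$.

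I do not expect a genuine obstacle here: the substance is the Pythagoras identity $\mathcal{L}_\lambda[L_\lambda\varphi+w]=\mathcal{L}_\lambda[L_\lambda\varphi]+\mathcal{L}_\lambda[w]$ for $w\in H^1_0(U_e,m_e)$, which is just the $\mathcal{L}_\lambda(\cdot|\cdot)$-orthogonality of $\mathcal{H}_\lambda$ and $H^1_0(U_e,m_e)$ that underlies the splitting $H^1(U_e,m_e)=H^1_0(U_e,m_e)\oplus\mathcal{H}_\lambda$. The one point requiring care is that uniqueness of the minimizer needs $\mathcal{L}_\lambda$ to be positive \emph{definite}, not merely nonnegative, on $H^1_0(U_e,m_e)$; this is precisely what the stated $[0,+\infty)$-valuedness of $\mathcal{L}_\lambda$ amounts to, namely that $\lambda$ lies strictly below the bottom of $\sigma(\D_{U_e},H^1_0(U_e,m_e))$.
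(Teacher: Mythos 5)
Your proof is correct but proceeds differently from the paper's. The paper's proof is an abstract convexity argument: it observes that $\mathcal{L}_\lambda$ is a continuous functional on $H^1(U_e,m_e)$ and that $C_\varphi$ is a closed convex set, then invokes ``the projection theorem on closed convex sets in Hilbert spaces'' to get existence and uniqueness of the minimizer, leaving implicit both the strict convexity hypothesis that the projection theorem actually requires and the Euler--Lagrange step identifying the minimizer with the element of $\mathcal{H}_\lambda$ produced by Lemma~3.2. You instead build $L_\lambda$ explicitly as $(\mathrm{Tr}|_{\mathcal{H}_\lambda})^{-1}$, which immediately gives the first assertion, and then establish minimality by the Dirichlet-principle / Pythagoras argument: for $u\in C_\varphi$ write $u=L_\lambda\varphi+w$ with $w\in H^1_0(U_e,m_e)=\ker\mathrm{Tr}$, use the defining $\mathcal{L}_\lambda$-orthogonality of $\mathcal{H}_\lambda$ against $H^1_0(U_e,m_e)$ to split $\mathcal{L}_\lambda[u]=\mathcal{L}_\lambda[L_\lambda\varphi]+\mathcal{L}_\lambda[w]$, and conclude from nonnegativity. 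What your route buys is that it makes transparent \emph{why} the minimizer lies in $\mathcal{H}_\lambda$ (the paper's projection-theorem phrasing does not touch this), and it isolates exactly where the sign hypothesis enters: both the minimality and the uniqueness require $\mathcal{L}_\lambda$ to be nonnegative, indeed positive definite, on $H^1_0(U_e,m_e)$, which together with $\lambda\notin\sigma(\D_{U_e},H^1_0(U_e,m_e))$ forces $\lambda$ strictly below the bottom of that spectrum. The paper's one-line invocation of the projection theorem rests on the same unstated hypothesis (strict convexity of $\mathcal{L}_\lambda$, i.e.\ equivalence of $\mathcal{L}_\lambda$ to the square of a Hilbert norm), so the two proofs share the same real content; yours is simply more explicit about it and is better aligned with the two-part structure of the statement.
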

\begin{proof}
The Dirichlet integral $\D_{U_e}$, the norm square $\|\cdot\|^2_{L^2(U_e,m_e)}$ and the functional $\mathcal{L}_\lambda$ are continuous functional on the Dirichlet space $H^1(U_e,m_e)$ endowed with the Hilbertian norm $\bigl(\D_{U_e}[\cdot]+\|\cdot\|^2_{L^2(U_e,m_e)}\bigr)^{1/2}$. Since $C_\varphi\subset H^1(U_e,m_e)$ is a closed and convex set, the existence and uniqueness follows from the projection theorem on closed convex sets in Hilbert spaces.
\end{proof}
Recall that we denote by $-\Delta$ the self-adjoint, nonnegative operator on $L^2(U_e,m_e)$ whose closed quadratic form is the Dirichlet form
$(\D_{U_e},H^1_0(U_e,m_e))$.
\par\noindent
Next results expresses the fact that the operators $L_\lambda$ can be expressed through the one corresponding to the value $\lambda =0$ by a bounded operator which is functional calculus of $-\Delta$.
\begin{lemma}
Assuming  $0\notin \sigma(\D_{U_e},H^1_0(U_e,m_e))$ and for any $\lambda\notin \sigma(\D_{U_e},H^1_0(U_e,m_e))$, the operator
$A_\lambda :=-\Delta(-\Delta-\lambda)^{-1}$ is self-adjoint and bounded on $L^2(U_e,m_e)$ and on $H^1(U_e,m_e)$ and establishes a continuous isomorphism between the spaces
$\mathcal{H}_0$ and $\mathcal{H}_\lambda$ such that
\[
L_\lambda =A_\lambda\circ L_0\, .
\]
\end{lemma}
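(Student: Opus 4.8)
The plan is to reduce everything to the resolvent identity. Since $\lambda\notin\sigma(-\Delta)$, the operator $R_\lambda:=(-\Delta-\lambda)^{-1}$ is bounded and self-adjoint on $L^2(U_e,m_e)$, and from $(-\Delta-\lambda)R_\lambda=I$ one reads off, on all of $L^2(U_e,m_e)$,
\[
A_\lambda=-\Delta\,R_\lambda=(-\Delta-\lambda)R_\lambda+\lambda R_\lambda=I+\lambda R_\lambda\,.
\]
Hence $A_\lambda$ is bounded and, being a real affine function of the self-adjoint $R_\lambda$, self-adjoint on $L^2(U_e,m_e)$; equivalently $A_\lambda=f(-\Delta)$ with $f(t)=t/(t-\lambda)$, a bounded continuous function on $\sigma(-\Delta)$, so $A_\lambda$ commutes with every Borel function of $-\Delta$, in particular with $(-\Delta)^{1/2}$ (whose form domain is $H^1_0(U_e,m_e)$).

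For the action on $H^1(U_e,m_e)$ I would use the splitting $H^1(U_e,m_e)=H^1_0(U_e,m_e)\oplus\mathcal H_0$ of Lemma 3.2 with $\lambda=0$ (legitimate since $0\notin\sigma(-\Delta)$). Restricted to the form domain $H^1_0(U_e,m_e)$, $A_\lambda$ maps into $D(-\Delta)\subseteq H^1_0(U_e,m_e)$ and commutes with $(-\Delta)^{1/2}$, so $\D_{U_e}[A_\lambda w]=\|A_\lambda(-\Delta)^{1/2}w\|^2_{L^2(U_e,m_e)}\le\|A_\lambda\|^2\,\D_{U_e}[w]$; together with the $L^2$ bound this gives boundedness in the graph norm, and symmetry for the graph inner product follows from the $L^2$ self-adjointness by the same functional-calculus identity. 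On $\mathcal H_0$, for $h\in\mathcal H_0$ set $g:=R_\lambda h\in D(-\Delta)\subseteq H^1_0(U_e,m_e)$, so $A_\lambda h=h+\lambda g\in H^1(U_e,m_e)$; from $\D_{U_e}[g]=(h+\lambda g\,|\,g)_{L^2(U_e,m_e)}$ and $\|g\|_{L^2(U_e,m_e)}\le\|R_\lambda\|\,\|h\|_{L^2(U_e,m_e)}$ one gets $\|g\|_{H^1(U_e,m_e)}\le C\,\|h\|_{L^2(U_e,m_e)}$, hence $\|A_\lambda h\|_{H^1(U_e,m_e)}\le C'\,\|h\|_{H^1(U_e,m_e)}$. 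Since the splitting of Lemma 3.2 is a topological direct sum, $A_\lambda$ is bounded on $H^1(U_e,m_e)$.

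The substance of the lemma is the intertwining $L_\lambda=A_\lambda\circ L_0$, which I would prove by uniqueness. Fix $\varphi\in\F$, put $h:=L_0\varphi\in\mathcal H_0$ and $g:=R_\lambda h\in H^1_0(U_e,m_e)$, so $A_\lambda h=h+\lambda g\in H^1(U_e,m_e)$. Since $H^1_0(U_e,m_e)=\ker({\rm Tr})$ we get ${\rm Tr}(A_\lambda h)={\rm Tr}(h)={\rm Tr}(L_0\varphi)=\varphi$. To see that $A_\lambda h\in\mathcal H_\lambda$, take $v\in H^1_0(U_e,m_e)$ and combine $\D_{U_e}(v|h)=0$ (definition of $\mathcal H_0$) with the weak equation $\D_{U_e}(v|g)-\lambda(v|g)_{L^2(U_e,m_e)}=(v|h)_{L^2(U_e,m_e)}$ (the form version of $(-\Delta-\lambda)g=h$), which gives
\[
\D_{U_e}(v|A_\lambda h)-\lambda(v|A_\lambda h)_{L^2(U_e,m_e)}=\lambda\Bigl(\D_{U_e}(v|g)-\lambda(v|g)_{L^2(U_e,m_e)}\Bigr)-\lambda(v|h)_{L^2(U_e,m_e)}=0\,.
\]
So $A_\lambda h\in\mathcal H_\lambda$ has trace $\varphi$; because the trace operator is a bijection of $\mathcal H_\lambda$ onto $\F$ (Lemma 3.4, equivalently the uniqueness in Lemma 3.3), this forces $A_\lambda L_0\varphi=A_\lambda h=L_\lambda\varphi$.

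Finally, specializing $\varphi={\rm Tr}(h)$ with $h\in\mathcal H_0$ turns the identity into $A_\lambda h=L_\lambda({\rm Tr}\,h)$, i.e. $A_\lambda|_{\mathcal H_0}=L_\lambda\circ({\rm Tr}|_{\mathcal H_0})$. By Lemma 3.4 both ${\rm Tr}|_{\mathcal H_0}\colon\mathcal H_0\to\F$ and $L_\lambda\colon\F\to\mathcal H_\lambda$ are continuous isomorphisms, so $A_\lambda$ restricts to a continuous isomorphism $\mathcal H_0\to\mathcal H_\lambda$ with inverse $L_0\circ({\rm Tr}|_{\mathcal H_\lambda})$. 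I expect the point requiring most care to be twofold: verifying the membership $A_\lambda h\in\mathcal H_\lambda$ cleanly (which rests entirely on the weak formulation of $(-\Delta-\lambda)g=h$) and transferring the $L^2$-boundedness of $A_\lambda$ to the graph norm of $H^1(U_e,m_e)$ — both of which, as above, come down to the resolvent identity $A_\lambda=I+\lambda R_\lambda$ together with the splitting of Lemma 3.2.
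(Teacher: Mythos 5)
Your proposal is correct, and it deviates from the paper's argument in two useful ways. For the intertwining $L_\lambda=A_\lambda\circ L_0$, the paper works ``backwards'': it sets $v:=L_\lambda\varphi-L_0\varphi\in H^1_0(U_e,m_e)$, shows from the weak equations that $v=\lambda(-\Delta)^{-1}L_\lambda\varphi$, and then rearranges $(I-\lambda(-\Delta)^{-1})L_\lambda\varphi=L_0\varphi$ using $A_\lambda^{-1}=I-\lambda(-\Delta)^{-1}$. You instead go ``forwards'': you verify directly that $A_\lambda L_0\varphi=h+\lambda R_\lambda h$ lies in $\mathcal H_\lambda$ and has trace $\varphi$, then invoke uniqueness (Lemma~3.3/3.4). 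Both are algebraically equivalent, but your version makes the role of the weak formulation of $(-\Delta-\lambda)g=h$ more transparent and avoids manipulating the a priori unknown $L_\lambda\varphi$.

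More significantly, your treatment of boundedness on $H^1(U_e,m_e)$ is more careful than the paper's. The paper writes $\|A_\lambda u\|^2_{H^1}=\|(-\Delta+I)^{1/2}A_\lambda u\|^2_2$ and commutes through, but $(-\Delta+I)^{1/2}$ is the functional calculus of the Dirichlet operator whose form domain is $H^1_0(U_e,m_e)$, so that identity is only literally available for $u\in H^1_0(U_e,m_e)$, not on all of $H^1(U_e,m_e)$. You repair this by splitting $H^1(U_e,m_e)=H^1_0(U_e,m_e)\oplus\mathcal H_0$ and bounding $A_\lambda$ on each summand separately — the functional-calculus computation on $H^1_0$ and the elementary estimate $\|R_\lambda h\|_{H^1}\le C\|h\|_{L^2}$ on $\mathcal H_0$ — then using that the sum is topological. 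This is the honest way to get the $H^1\to H^1$ bound, and it is a small but genuine improvement over the paper's proof.
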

\begin{proof}
The boundedness of $A_\lambda$ on $L^2(U_e,m_e)$ follows from the Spectral Theorem by the assumption that $\lambda$ belongs to the resolvent set. The boundedness of $A_\lambda$ on $H^1(U_e,m_e)$ follows from
\[
\begin{split}
\|A_\lambda u\|^2_{H^1(U_e,m_e)}&=\|(-\Delta +I)^{1/2}(-\Delta)(-\Delta-\lambda)^{-1}u\|^2_2 \\
&=\|(-\Delta)(-\Delta-\lambda)^{-1}(-\Delta +I)^{1/2}u\|^2_2 \\
&=\|A_\lambda(-\Delta +I)^{1/2}u\|^2_2 \\
&\le\|A_\lambda\|^2_{L^2\to L^2}\cdot \|(-\Delta +I)^{1/2}u\|^2_2 \\
&=\|A_\lambda\|^2_{L^2\to L^2}\cdot \|u\|^2_{H^1(U_e,m_e)} \qquad u\in H^1(U_e,m_e)
\end{split}
\]
so that, in particular, $\|A_\lambda\|^2_{H^1\to H^1}     \le       \|A_\lambda\|^2_{L^2 \to L^2}$. The invertibility of $A_\lambda$ follows from the assumptions $0,\lambda\notin \sigma(\D_{U_e},H^1_0(U_e,m_e))$ and one may check that $A_\lambda^{-1}=I-\lambda (-\Delta)^{-1}$.
\par\noindent
Let $\varphi\in \F$ so that, by definition, $L_0\varphi\in \mathcal{H}_0, L_\lambda\varphi\in \mathcal{H}_\lambda$ and ${\rm Tr}(L_0\varphi)={\rm Tr}(L_\lambda\varphi)=\varphi$. Setting $v:=L_\lambda\varphi-L_0\varphi\in H^1(U_e,m_e)$, since ${\rm Tr}(v)=\varphi-\varphi=0$, we have $v\in H^1_0(U_e,m_e)$, so that
\[\begin{split}
\D_{U_e}(w|L_\lambda\varphi)&=\lambda(w|L_\lambda\varphi)_2 \\
\D_{U_e}(w|L_0\varphi)&=0 \\
\D_{U_e}(w|v)&=\lambda(w|L_\lambda\varphi)_2\qquad\qquad w\in H^1_0(U_e,m_e)\, .
\end{split}
\]
Hence $L_\lambda\varphi-L_0\varphi=v=(-\Delta)^{-1}(\lambda L_\lambda\varphi)$ so that $L_\lambda\varphi=(-\Delta)(-\Delta-\lambda)^{-1}L_0\varphi=A_\lambda L_0\varphi$.
\end{proof}
For our present purposes, it is convenient to restate the above result as a relation between quadratic forms.
\begin{lemma}
For any $\lambda\notin \sigma(\D_{U_e},H^1_0(U_e,m_e))$, consider the quadratic form $(\E_\lambda,\F)$ defined as
\[
\E_\lambda[\varphi]:=\D_{U_e}[L_\lambda\varphi]-\lambda\cdot \|L_\lambda\varphi\|^2_{L^2(U_e,m_e)}\qquad \varphi\in \F\, .
\]
Then, if $0\notin \sigma(\D_{U_e},H^1_0(U_e,m_e))$ we have
\[
\E_0[\varphi]-\E_\lambda[\varphi]=\lambda\cdot (L_0\varphi | A_\lambda L_0\varphi)_{L^2(U_e,m_e)}\qquad \varphi\in \F\, .
\]
\end{lemma}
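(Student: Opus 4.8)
The plan is to reduce the statement to the algebraic identity $L_\lambda = A_\lambda\circ L_0$ from the previous lemma, together with the two defining orthogonality relations of the spaces $\mathcal{H}_0$ and $\mathcal{H}_\lambda$: namely $\D_{U_e}(w\,|\,L_0\varphi)=0$ for every $w\in H^1_0(U_e,m_e)$ (because $L_0\varphi\in\mathcal{H}_0$), and $\D_{U_e}(w\,|\,L_\lambda\varphi)=\lambda\,(w\,|\,L_\lambda\varphi)_{L^2(U_e,m_e)}$ for every $w\in H^1_0(U_e,m_e)$ (because $L_\lambda\varphi\in\mathcal{H}_\lambda$). All the bilinear forms involved are continuous on $H^1(U_e,m_e)$, so no limiting argument is needed.

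First I would set $u_0:=L_0\varphi$, $u_\lambda:=L_\lambda\varphi$ and $v:=u_\lambda-u_0\in H^1(U_e,m_e)$. Since ${\rm Tr}(u_0)={\rm Tr}(u_\lambda)=\varphi$, one has ${\rm Tr}(v)=0$, hence $v\in\ker({\rm Tr})=H^1_0(U_e,m_e)$; this is precisely what allows the two orthogonality relations above to be applied with $w=v$. Next I would expand the Dirichlet energy of $u_\lambda=u_0+v$,
\[
\D_{U_e}[u_\lambda]=\D_{U_e}[u_0]+2\,\D_{U_e}(u_0\,|\,v)+\D_{U_e}[v]=\D_{U_e}[u_0]+\D_{U_e}[v],
\]
the cross term vanishing because $v\in H^1_0(U_e,m_e)$ and $u_0\in\mathcal{H}_0$. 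Substituting this into
$\E_0[\varphi]-\E_\lambda[\varphi]=\D_{U_e}[u_0]-\D_{U_e}[u_\lambda]+\lambda\|u_\lambda\|^2_{L^2(U_e,m_e)}$
leaves $-\D_{U_e}[v]+\lambda\|u_\lambda\|^2_{L^2(U_e,m_e)}$.

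Then I would compute $\D_{U_e}[v]$ using $\lambda$-harmonicity of $u_\lambda$: since $v\in H^1_0(U_e,m_e)$,
\[
\D_{U_e}[v]=\D_{U_e}(v\,|\,u_\lambda)-\D_{U_e}(v\,|\,u_0)=\lambda\,(v\,|\,u_\lambda)_{L^2(U_e,m_e)}=\lambda\|u_\lambda\|^2_{L^2(U_e,m_e)}-\lambda\,(u_0\,|\,u_\lambda)_{L^2(U_e,m_e)}.
\]
Plugging this back, the two occurrences of $\lambda\|u_\lambda\|^2_{L^2(U_e,m_e)}$ cancel and one is left with $\lambda\,(u_0\,|\,u_\lambda)_{L^2(U_e,m_e)}$. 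Finally, replacing $u_\lambda=L_\lambda\varphi=A_\lambda L_0\varphi=A_\lambda u_0$ by the previous lemma gives exactly $\lambda\,(L_0\varphi\,|\,A_\lambda L_0\varphi)_{L^2(U_e,m_e)}$, which is the claimed identity.

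There is no serious obstacle: the statement is a bookkeeping identity and every ingredient has been established in the preceding lemmas. The only points that deserve a word of care are verifying $v\in H^1_0(U_e,m_e)$ (so that the two orthogonality relations apply) and keeping track of whether it is the Dirichlet form $\D_{U_e}$ or the $L^2(U_e,m_e)$ inner product that is being polarized at each step; both are immediate. One could alternatively bypass $v$ and argue directly from $A_\lambda^{-1}=I-\lambda(-\Delta)^{-1}$ together with the self-adjointness of $A_\lambda$ on $L^2(U_e,m_e)$, but introducing $v$ keeps the computation closest to the variational characterization of $L_0$ and $L_\lambda$.
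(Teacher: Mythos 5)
Your proof is correct and follows essentially the same route as the paper: both decompose $L_\lambda\varphi=L_0\varphi+v$ with $v\in H^1_0(U_e,m_e)$, exploit the two orthogonality relations $\D_{U_e}(v\,|\,L_0\varphi)=0$ and $\D_{U_e}(v\,|\,L_\lambda\varphi)=\lambda(v\,|\,L_\lambda\varphi)_{L^2(U_e,m_e)}$, and conclude by substituting $L_\lambda\varphi=A_\lambda L_0\varphi$. The only difference is cosmetic bookkeeping (you isolate $\D_{U_e}[v]$ and plug it back, the paper expands $\E_\lambda[\varphi]$ in one running chain), which does not change the argument.
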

\begin{proof}
Since $L_\lambda\varphi\in\mathcal{H}_\lambda$ and $L_\lambda\varphi_0-L_0\varphi\in H^1_0(U_e,m_e)$ we have
\[
\D_{U_e}(L_0\varphi |L_\lambda\varphi_0-L_0\varphi)=0\, ,\qquad \D_{U_e}(L_\lambda\varphi_0-L_0\varphi|L_\lambda\varphi)=\lambda (L_\lambda\varphi_0-L_0\varphi|L_\lambda\varphi)\, .
\]
Then
\[
\begin{split}
\E_\lambda[\varphi]&=\D_{U_e}[L_\lambda\varphi]-\lambda\cdot \|L_\lambda\varphi\|^2_{L^2(U_e,m_e)} \\
&=\D_{U_e}[L_0\varphi +(L_\lambda\varphi-L_0\varphi)]-\lambda\cdot \|L_\lambda\varphi\|^2_{L^2(U_e,m_e)} \\
&=\D_{U_e}[L_0\varphi]+\D_{U_e}[L_\lambda\varphi-L_0\varphi] +2{\rm Re}\bigl(\D_{U_e}(L_0\varphi|L_\lambda\varphi-L_0\varphi)\bigr)-\lambda\cdot \|L_\lambda\varphi\|^2_{L^2(U_e,m_e)} \\
&=\E_0[\varphi]+\D_{U_e}(L_\lambda\varphi-L_0\varphi |L_\lambda\varphi)-\D_{U_e}(L_\lambda\varphi-L_0\varphi |L_0\varphi)-\lambda\cdot \|L_\lambda\varphi\|^2_{L^2(U_e,m_e)} \\
&=\E_0[\varphi]+\D_{U_e}(L_\lambda\varphi-L_0\varphi |L_\lambda\varphi)-\lambda\cdot \|L_\lambda\varphi\|^2_{L^2(U_e,m_e)} \\
&=\E_0[\varphi]+\lambda\cdot(L_\lambda\varphi-L_0\varphi |L_\lambda\varphi)_{L^2(U_e,m_e)}-\lambda\cdot \|L_\lambda\varphi\|^2_{L^2(U_e,m_e)} \\
&=\E_0[\varphi]-\lambda\cdot(L_0\varphi |L_\lambda\varphi)_{L^2(U_e,m_e)} \\
&=\E_0[\varphi]-\lambda\cdot(L_0\varphi |A_\lambda L_0\varphi)_{L^2(U_e,m_e)}\, .
\end{split}
\]
\end{proof}
Let us recall that the Dirichlet space $(\D_{U_e},H^1(U_e,m_e))$ on $L^2(\bar{U_e},{\bf 1}_{{\bar{U_e}}}\cdot dx)$ is {\it regular} in the sense of Dirichlet form theory (see [4]) if the involutive subalgebra $H^1(U_e,m_e)\cap C_0( \bar{U_e})$ is a form core, uniformly dense in $C_0(\bar{U_e})$. This is the case, for example, if $U_e$ has {\it continuous boundary} in the sense of Maz'ya (see [4]) and in particular if the potential $V$ is continuous.
\begin{lemma}
If the Dirichlet space $(\D_{U_e},H^1(U_e,m_e))$ on $L^2(\bar{U_e},{\bf 1}_{{\bar{U_e}}}\cdot dx)$ is regular, then
\vskip0.2truecm\noindent
i) the algebra $\B:=\F\cap C_0(\partial U_e)$ is uniformly dense in $C_0(\partial U_e)$\\
ii) the map $L_0:\B\to \mathcal{H}_0$ extends to a Markovian map from $C_0(\partial U_e)$ to $C_b(U_e)$ and\\
iii) for any fixed $x\in U_e$ there exists a unique probability measure $\mu_x$ on $\partial U_e$ such that
\[
(L_0\varphi)(x)=\int_{\partial U_e}\varphi\, d\mu_x\qquad \varphi\in \B\, .
\]
\end{lemma}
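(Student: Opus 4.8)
The idea is to realize $L_0$ as the Poisson (harmonic extension) operator of the regular Dirichlet space $(\D_{U_e},H^1(U_e,m_e))$ and to extract the boundary kernel $\mu_x$ from its Markovian character. I would obtain (i) by transporting the regularity hypothesis through the restriction $*$-homomorphism onto $\partial U_e$, (ii) by combining the variational description of $L_0$ from Lemma 3.3 with the operation of normal contractions on $H^1(U_e,m_e)$ and interior elliptic regularity, and (iii) by the Riesz representation theorem applied to the evaluation functionals of the resulting operator, the normalization of the total mass being the one genuinely nonformal point.

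\textbf{Proof of (i)--(ii).} Since $\partial U_e=\{V=e\}$ is closed in the locally compact Hausdorff space $\bar{U_e}$, the restriction map $R\colon C_0(\bar{U_e})\to C_0(\partial U_e)$, $Rf:=f|\partial U_e$, is a $*$-homomorphism (hence $\|Rf\|_\infty\le\|f\|_\infty$) and it is surjective by the Tietze extension theorem for $C_0$-functions on closed subsets of locally compact Hausdorff spaces. By the regularity hypothesis the subalgebra $\mathcal A:=H^1(U_e,m_e)\cap C_0(\bar{U_e})$ is uniformly dense in $C_0(\bar{U_e})$; since $Ru={\rm Tr}(u)\in\F$ and $Ru\in C_0(\partial U_e)$ for $u\in\mathcal A$, we have $R(\mathcal A)\subseteq\B$, and $R(\mathcal A)$ is dense in $C_0(\partial U_e)$ because $R$ is continuous and onto. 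This is (i). For (ii), recall from Lemma 3.3 that $L_0\varphi$ is the unique minimizer of $u\mapsto\D_{U_e}[u]$ over $C_\varphi=\{u\in H^1(U_e,m_e):{\rm Tr}(u)=\varphi\}$. Since $L_0\varphi\in\mathcal H_0$ and $C_c^\infty(U_e)\subseteq H^1_0(U_e,m_e)$, one has $\D_{U_e}(v|L_0\varphi)=0$ for all $v\in C_c^\infty(U_e)$, so $L_0\varphi$ is weakly harmonic on $U_e$ and therefore admits a $C^\infty(U_e)$ representative. Now let $\varphi\in\B$ be real with $c:=\|\varphi\|_\infty$ and set $w:=F(L_0\varphi)$, where $F(t):=(-c\vee t)\wedge c$ is a normal contraction. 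Because $(\D_{U_e},H^1(U_e,m_e))$ is a Dirichlet form (Lemma 2.1), normal contractions operate, so $w\in H^1(U_e,m_e)$ and $\D_{U_e}[w]\le\D_{U_e}[L_0\varphi]$; and, using quasi-continuous representatives (legitimate since the space is regular), ${\rm Tr}(w)=F({\rm Tr}(L_0\varphi))=F(\varphi)=\varphi$, whence $w\in C_\varphi$ and uniqueness of the minimizer forces $w=L_0\varphi$, i.e. $\|L_0\varphi\|_\infty\le\|\varphi\|_\infty$. Taking instead $F(t):=t\vee0$ shows that $\varphi\ge0\Rightarrow L_0\varphi\ge0$. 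Thus $L_0\colon\B\to C_b(U_e)$ is a positive linear contraction, and by (i) it extends uniquely to a positive linear contraction $\widetilde L_0\colon C_0(\partial U_e)\to C_b(U_e)$ --- a uniform limit on $U_e$ of harmonic functions being again continuous and bounded on $U_e$ --- which is the Markovian map in (ii).

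\textbf{Proof of (iii).} Fix $x\in U_e$. By (ii) the functional $\varphi\mapsto(\widetilde L_0\varphi)(x)$ on $C_0(\partial U_e)$ is linear, positive and of norm $\le1$, so by the Riesz representation theorem there is a unique positive Radon measure $\mu_x$ on $\partial U_e$ with $(\widetilde L_0\varphi)(x)=\int_{\partial U_e}\varphi\,d\mu_x$ for all $\varphi\in C_0(\partial U_e)$ --- in particular for $\varphi\in\B$ --- and with $\mu_x(\partial U_e)\le1$. To see that $\mu_x$ is a probability measure, note that $m_e(U_e)<\infty$ (as is the case, e.g., when $e<0$, since then $U_e=\{|V|>|e|\}$ and $V\in L^{n/2+\varepsilon}$), so the constant $\mathbf 1$ belongs to $H^1(U_e,m_e)$, is $\D_{U_e}$-harmonic, and satisfies ${\rm Tr}(\mathbf 1)=\mathbf 1$; by the uniqueness in Lemma 3.3 it follows that $\mathbf 1\in\mathcal H_0$ and $L_0\mathbf 1=\mathbf 1$. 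The form-core property gives $u_k\in\mathcal A$ with $u_k\to\mathbf 1$ in $H^1(U_e,m_e)$; then $g_k:={\rm Tr}\bigl((0\vee u_k)\wedge\mathbf 1\bigr)\in\B$ satisfies $0\le g_k\le\mathbf 1$ and, since normal contractions act continuously on a Dirichlet space and commute with ${\rm Tr}$, $g_k\to\mathbf 1$ in $\F$. As $L_0\colon\F\to\mathcal H_0\hookrightarrow H^1(U_e,m_e)$ is bounded (being the inverse of the bounded bijection ${\rm Tr}|_{\mathcal H_0}$ of Lemma 3.2), we get $L_0g_k\to L_0\mathbf 1=\mathbf 1$ in $H^1(U_e,m_e)$, and interior estimates for harmonic functions then give $L_0g_k(x)\to1$. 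Hence $\mu_x(\partial U_e)\ge\int_{\partial U_e}g_k\,d\mu_x=L_0g_k(x)\to1$, so $\mu_x(\partial U_e)=1$ and $(L_0\varphi)(x)=\int_{\partial U_e}\varphi\,d\mu_x$ for $\varphi\in\B$, as asserted.

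\textbf{Main obstacle.} The heart of the matter is step (ii): upgrading the variational definition of $L_0$ to the \emph{pointwise} Markovian bound rests on two structural properties of the regular Dirichlet space $(\D_{U_e},H^1(U_e,m_e))$ --- that normal contractions operate on it, and that they commute with the trace operator, ${\rm Tr}(F(u))=F({\rm Tr}(u))$ --- the latter being precisely where the regularity hypothesis enters, via quasi-continuous representatives. A secondary delicate point, settled at the end of (iii), is that the harmonic kernel $\mu_x$ carries no mass ``at infinity'' when $\partial U_e$ is noncompact; here finiteness of $m_e$, the identity $L_0\mathbf 1=\mathbf 1$, and the form-core approximation are what close the gap.
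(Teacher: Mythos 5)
Your proof is correct and follows the same overall skeleton as the paper's (density of $\B$ via the trace of the form core, Markovianity of $L_0$, then Riesz representation), but the treatment of (ii) and (iii) is genuinely different, and in both places your version is more robust. For (ii), the paper invokes the classical Maximum Principle for harmonic functions to obtain $0\le L_0\varphi\le 1$, which implicitly presupposes enough boundary and behaviour-at-infinity regularity to apply that principle; you instead exploit the Dirichlet-form structure directly, truncating $L_0\varphi$ by the normal contraction $F$ and using (a) Markovianity of $(\D_{U_e},H^1(U_e,m_e))$ so that $F(L_0\varphi)\in H^1(U_e,m_e)$ with no increase of energy, (b) the commutation ${\rm Tr}\circ F = F\circ{\rm Tr}$ on quasi-continuous representatives (available precisely because the space is regular), and (c) uniqueness of the variational minimizer from Lemma 3.3 to conclude $F(L_0\varphi)=L_0\varphi$. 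This reproduces the conclusion without any pointwise maximum-principle input and is the intrinsic Dirichlet-space argument one would want if $\partial U_e$ were irregular. For (iii), the paper closes with ``since $L_0 1 = 1$, $\mu_x(\partial U_e)=1$'', silently using both that $\mathbf 1\in H^1(U_e,m_e)$ (hence $m_e(U_e)<\infty$) and that $\mathbf 1\in C_0(\partial U_e)$ (which fails if $\partial U_e$ is noncompact); your approximation $g_k={\rm Tr}((0\vee u_k)\wedge\mathbf 1)\in\B$, together with boundedness of $L_0=({\rm Tr}|_{\H_0})^{-1}$ from Lemma 3.2 and interior estimates for harmonic functions, fills exactly this gap and makes the finiteness of $m_e$ explicit as the hypothesis that drives the normalization. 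In short: same architecture, but your (ii) replaces a classical PDE tool by a structural Dirichlet-form one, and your (iii) supplies the missing compactness/approximation step.

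One small remark: when you invoke ``uniqueness of the minimizer'' in (ii), this rests on $0\notin\sigma(\D_{U_e},H^1_0(U_e,m_e))$, which the statement of the lemma does not repeat but which is needed for $L_0$ to be defined at all (Lemma 3.4); it would be worth flagging that this standing hypothesis is in force.
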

\begin{proof}
i) The result follows from $H^1(U_e,m_e)\cap C_0( \bar{U_e})\subseteq \B$ and the regularity assumption.
ii) By the Maximum Principle for harmonic functions, $0\le L_0\varphi\le 1$ for all $\varphi\in \B$ such that $0\le\varphi\le 1$. The map $L_0$ is then continuous w.r.t. the uniform norm and, since $\B$ is norm dense in $C_0(\partial U_e)$, it extends to a Markovian map from $C_0(\partial U_e)$ to $C_b(\partial U_e)$. iii) The functional $\varphi\mapsto (L_0\varphi)(x)$ is then positive on $C_0(\partial U_e)$ and it can be represented by a positive  measure $\mu_x$ on $\partial U_e$. Since $L_01=1$ we have $\mu_x(\partial U_e)=(L_0 1)(x)=1$.
\end{proof}\noindent
The measures $\{\mu_x:x\in U_e\}$ on $\partial U_e$ are the {\it harmonic measures} of the Euclidean domain $U_e$. In particular, like the operator $L_0$, they are independent upon the measure $m_e$ and a fortiori upon the potential.\\
Next results show that the measure $m_e=(V-e)_-\cdot dx$ on $U_e$ and the family of harmonic measures provide a natural measure on the boundary
$\partial U_e$ with respect to which the difference between the quadratic forms above can be conveniently considered.
\begin{lemma}
Assume the Dirichlet space $(\D_{U_e},H^1(U_e,m_e))$ on $L^2(\bar{U_e},{\bf 1}_{{\bar{U_e}}}\cdot dx)$ to be regular and consider the positive measure $\mu_e$ on $\partial U_e$ defined by
\[
\mu_e:=\int_{U_e} m_e(dx)\,\mu_x\, .
\]
Then, under the assumption $0\notin \sigma(\D_{U_e},H^1_0(U_e,m_e))$ and for any $\lambda\notin \sigma(\D_{U_e},H^1_0(U_e,m_e))$, the quadratic forms $\E_0$ and $\E_\lambda$ differ by a bounded quadratic form on the Hilbert space $L^2(\partial U_e,\mu_e)$. In particular we have
\[
\Bigl|\E_0[\varphi]-\E_\lambda[\varphi]\Bigr|\le \|\lambda\cdot A_\lambda\|_{L^2(U_e,m_e)\to L^2(U_e,m_e)}\cdot \|\varphi\|^2_{L^2(\partial U_e,\mu_e)}\qquad \varphi\in \F\, .
\]
\end{lemma}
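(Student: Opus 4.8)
The plan is to reduce the statement to the difference formula already established in the preceding lemma, together with a single contraction estimate for the harmonic extension operator $L_0$.

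First I would invoke the formula $\E_0[\varphi]-\E_\lambda[\varphi]=\lambda\cdot(L_0\varphi\,|\,A_\lambda L_0\varphi)_{L^2(U_e,m_e)}$, valid for $\varphi\in\F$ under the hypotheses $0,\lambda\notin\sigma(\D_{U_e},H^1_0(U_e,m_e))$. Since $A_\lambda=-\Delta(-\Delta-\lambda)^{-1}$ was shown to be self-adjoint and bounded on $L^2(U_e,m_e)$, so is $\lambda A_\lambda$, and therefore
\[
\bigl|\E_0[\varphi]-\E_\lambda[\varphi]\bigr|=\bigl|(L_0\varphi\,|\,\lambda A_\lambda L_0\varphi)_{L^2(U_e,m_e)}\bigr|\le\|\lambda A_\lambda\|_{L^2(U_e,m_e)\to L^2(U_e,m_e)}\cdot\|L_0\varphi\|^2_{L^2(U_e,m_e)}\, .
\]
Thus the whole assertion follows once one establishes the contraction inequality $\|L_0\varphi\|^2_{L^2(U_e,m_e)}\le\|\varphi\|^2_{L^2(\partial U_e,\mu_e)}$ for $\varphi\in\F$: this yields the quantitative bound, and it shows that $L_0$ extends to a contraction $L^2(\partial U_e,\mu_e)\to L^2(U_e,m_e)$, so that $\E_0-\E_\lambda$ extends to a bounded quadratic form on $L^2(\partial U_e,\mu_e)$.

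To prove the contraction inequality I would first take $\varphi\in\B=\F\cap C_0(\partial U_e)$, for which the harmonic-measure representation $(L_0\varphi)(x)=\int_{\partial U_e}\varphi\,d\mu_x$ holds with each $\mu_x$ a probability measure on $\partial U_e$. By the Cauchy--Schwarz (equivalently Jensen) inequality applied with the constant function $1$ one gets the pointwise bound $|(L_0\varphi)(x)|^2\le\int_{\partial U_e}|\varphi|^2\,d\mu_x$ for every $x\in U_e$; integrating in $m_e(dx)$, using Tonelli's theorem and the very definition $\mu_e=\int_{U_e}m_e(dx)\,\mu_x$,
\[
\|L_0\varphi\|^2_{L^2(U_e,m_e)}=\int_{U_e}|(L_0\varphi)(x)|^2\,m_e(dx)\le\int_{U_e}\Bigl(\int_{\partial U_e}|\varphi|^2\,d\mu_x\Bigr)m_e(dx)=\int_{\partial U_e}|\varphi|^2\,d\mu_e=\|\varphi\|^2_{L^2(\partial U_e,\mu_e)}\, .
\]

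It then remains to pass from $\B$ to all of $\F$. By the regularity hypothesis $\B$ is uniformly dense in $C_0(\partial U_e)$, hence dense in $L^2(\partial U_e,\mu_e)$, so the estimate just obtained extends $L_0$ by continuity to a contraction $L^2(\partial U_e,\mu_e)\to L^2(U_e,m_e)$; if $\varphi\in\F\setminus L^2(\partial U_e,\mu_e)$ the claimed bound is trivial. I expect the only genuinely delicate point of the argument to be checking that this $L^2(\mu_e)$-continuous extension agrees, on $\F\cap L^2(\partial U_e,\mu_e)$, with the weak solution operator $L_0$ of the earlier lemma on the trace space --- the two natural topologies on $\F$ (the trace norm inherited from $H^1(U_e,m_e)$ and the $L^2(\mu_e)$ norm) are not comparable a priori. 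This can be handled either by approximating $\varphi$ in the form norm of $H^1(U_e,m_e)$ by elements of the core $H^1(U_e,m_e)\cap C_0(\bar U_e)$ and using continuity of the trace together with the fact that pointwise evaluation of harmonic functions is controlled by local $L^2$ norms, or, more directly, by observing that $|L_0\varphi|^2$ is subharmonic on $U_e$ so that the sub-mean-value inequality against harmonic measure reproduces the pointwise bound $|(L_0\varphi)(x)|^2\le\int_{\partial U_e}|\varphi|^2\,d\mu_x$ for every $\varphi\in\F$. Everything else is the difference formula, the boundedness of $A_\lambda$, and the one-line Cauchy--Schwarz/Tonelli computation above.
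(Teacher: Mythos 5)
Your proposal follows exactly the same path as the paper's proof: first invoke the difference formula $\E_0[\varphi]-\E_\lambda[\varphi]=\lambda\,(L_0\varphi\,|\,A_\lambda L_0\varphi)_{L^2(U_e,m_e)}$ from the preceding lemma, then reduce everything to the contraction estimate $\|L_0\varphi\|_{L^2(U_e,m_e)}\le\|\varphi\|_{L^2(\partial U_e,\mu_e)}$, which both you and the paper obtain for $\varphi\in\B$ via Jensen's inequality against the probability measures $\mu_x$ and Tonelli's theorem, followed by Cauchy--Schwarz and boundedness of $\lambda A_\lambda$. The only point on which you are more careful than the paper is the passage from $\B$ to $\F$ --- the paper silently identifies the $L^2(\mu_e)$-continuous extension of $L_0$ with the weak-solution operator on all of $\F$, whereas you flag this and sketch two ways to close it (density in the form norm plus continuity of the trace, or subharmonicity of $|L_0\varphi|^2$ giving the pointwise sub-mean-value bound directly for every $\varphi\in\F$); the second is a clean fix and arguably the more robust route.
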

\begin{proof}
By H\"older inequality, for all $\varphi\in\B$ we have
\[
\begin{split}
\|L_0\varphi\|^2_{L^2(\partial U_e)}&=\int_{U_e}|(L_0\varphi)(x)|^2\, m_e(dx)=\int_{U_e}m_e(dx)\Bigl|\int_{\partial U_e}\varphi (y)\, \mu_x(dy)\Bigr|^2 \\
&\le \int_{U_e}m_e(dx)\int_{\partial U_e}|\varphi (y)|^2\, \mu_x(dy) \\
&=\|\varphi\|^2_{L^2(\partial U_e,\mu_e)}\, .
\end{split}
\]
Thus $L_0$ extends to a contraction on from $L^2(\partial U_e,\mu_e)$ to $L^2(U_e,m_e)$ and, by Lemma 3.5 above, we have for $\varphi\in \F$
\[
\begin{split}
\Bigl|\E_0[\varphi]-\E_\lambda[\varphi]\Bigr|&=(L_0\varphi | \lambda\cdot A_\lambda L_0\varphi)_{L^2(U_e,m_e)}| \\
&\le \|\lambda\cdot A_\lambda\|_{L^2(U_e,m_e)\to L^2(U_e,m_e)}\cdot \|\varphi\|^2_{L^2(\partial U_e,\mu_e)}\, .
\end{split}
\]
\end{proof}
\begin{lemma}
Assume the Dirichlet space $(\D_{U_e},H^1(U_e,m_e))$ on $L^2(\bar{U_e},{\bf 1}_{{\bar{U_e}}}\cdot dx)$ to be regular and $0\notin \sigma(\D_{U_e},H^1_0(U_e,m_e))$. Then, for any $\lambda\notin \sigma(\D_{U_e},H^1_0(U_e,m_e))$, the quadratic forms $(\E_0,\F)$ and $(\E_\lambda,\F)$ are closed on $L^2(\partial U_e,\mu_e)$. In particular, the former is nonnegative, the latter is lower semibounded and the following bound holds true
\[
\E_0[\varphi]-\|\lambda A_\lambda\|\cdot \|\varphi\|^2_{L^2(\partial U_e,\mu_e)}\le\E_\lambda[\varphi]\qquad \varphi \in \F\, .
\]
Finally, $(\E_0,\F)$ is a Dirichlet form on $L^2(\partial U_e,\mu_e)$.
\end{lemma}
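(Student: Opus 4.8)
The plan is to deduce everything from Lemma 3.7 together with the closedness of the Dirichlet form $(\D_{U_e},H^1(U_e,m_e))$ (Lemma 2.1) transported to the boundary via the isomorphism $\mathrm{Tr}:\mathcal{H}_0\to\F$ of Lemma 3.3. First I would treat $(\E_0,\F)$. By construction $\E_0[\varphi]=\D_{U_e}[L_0\varphi]$ and $L_0:\F\to\mathcal{H}_0$ inverts $\mathrm{Tr}$; moreover by Lemma 3.7 (its proof) $L_0$ extends to a contraction $L^2(\partial U_e,\mu_e)\to L^2(U_e,m_e)$. So the graph norm $\bigl(\E_0[\varphi]+\|\varphi\|^2_{L^2(\partial U_e,\mu_e)}\bigr)^{1/2}$ is comparable to $\bigl(\D_{U_e}[L_0\varphi]+\|L_0\varphi\|^2_{L^2(U_e,m_e)}\bigr)^{1/2}$, i.e. to the $H^1(U_e,m_e)$-norm of $L_0\varphi$, on the range $\mathcal{H}_0$. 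To show $(\E_0,\F)$ is closed, take an $\E_0$-Cauchy sequence $\varphi_n\to\varphi$ in $L^2(\partial U_e,\mu_e)$; then $u_n:=L_0\varphi_n$ is $\D_{U_e}$-Cauchy and, since $L_0$ is an $L^2$-contraction, $L^2(U_e,m_e)$-convergent, hence (Lemma 2.1) convergent in $H^1(U_e,m_e)$ to some $u$. One checks $u\in\mathcal{H}_0$ (the defining orthogonality relations against $H^1_0(U_e,m_e)$ pass to the limit), and then $\mathrm{Tr}(u_n)=\varphi_n\to\mathrm{Tr}(u)$ while also $\varphi_n\to\varphi$; if the trace is continuous into $L^2(\partial U_e,\mu_e)$ on $\mathcal{H}_0$ this forces $\mathrm{Tr}(u)=\varphi\in\F$ and $\E_0[\varphi_n-\varphi]=\D_{U_e}[u_n-u]\to0$. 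Nonnegativity of $\E_0$ is immediate from $\E_0[\varphi]=\D_{U_e}[L_0\varphi]\ge0$.

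For $(\E_\lambda,\F)$, I would not argue from scratch but invoke Lemma 3.7: $\E_\lambda=\E_0-B_\lambda$ where $B_\lambda[\varphi]=\lambda(L_0\varphi\,|\,A_\lambda L_0\varphi)_{L^2(U_e,m_e)}$ is a bounded symmetric quadratic form on $L^2(\partial U_e,\mu_e)$ with $|B_\lambda[\varphi]|\le\|\lambda A_\lambda\|\,\|\varphi\|^2_{L^2(\partial U_e,\mu_e)}$. A bounded symmetric perturbation of a closed, lower semibounded form is again closed and lower semibounded with the same form domain (KLMN / standard form perturbation theory), so $(\E_\lambda,\F)$ is closed and
\[
\E_\lambda[\varphi]\ge\E_0[\varphi]-\|\lambda A_\lambda\|\cdot\|\varphi\|^2_{L^2(\partial U_e,\mu_e)}\ge-\|\lambda A_\lambda\|\cdot\|\varphi\|^2_{L^2(\partial U_e,\mu_e)},
\]
which is exactly the asserted bound and gives lower semiboundedness.

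Finally, that $(\E_0,\F)$ is a Dirichlet form on $L^2(\partial U_e,\mu_e)$ reduces to verifying the Markovian (unit-contraction) property, closedness having just been established. Given $\varphi\in\F$ and the normal contraction $\psi=(0\vee\varphi)\wedge1$, I would use that $L_0$ is Markovian (Lemma 3.7(ii)): if $h=L_0\varphi$ is the $\D_{U_e}$-minimizing $0$-harmonic extension of $\varphi$, then the harmonic extension of $\psi$ is $\ge0$ and $\le1$ by the maximum principle, and $\D_{U_e}[L_0\psi]\le\D_{U_e}[(0\vee h)\wedge1]\le\D_{U_e}[h]=\E_0[\varphi]$, the first inequality because $(0\vee h)\wedge1$ is a competitor in $C_\psi$ and $L_0\psi$ is the minimizer, the second because truncation does not increase the Dirichlet integral. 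Hence $\E_0[\psi]\le\E_0[\varphi]$, which is the Markov property.

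The main obstacle is the point flagged above: one needs the trace map $\mathrm{Tr}$, which Lemma 3.3 only asserts to be an algebraic isomorphism $\mathcal{H}_0\to\F$, to interact continuously with the $L^2(\partial U_e,\mu_e)$-norm, so that $H^1(U_e,m_e)$-convergence of $u_n=L_0\varphi_n$ transfers to $L^2(\partial U_e,\mu_e)$-convergence of the boundary values — equivalently, that $L_0:L^2(\partial U_e,\mu_e)\to L^2(U_e,m_e)$ being a contraction has a well-behaved left inverse on its range. This is where the specific choice $\mu_e=\int m_e(dx)\,\mu_x$ is essential (it is engineered precisely so that $\|L_0\varphi\|_{L^2(U_e,m_e)}\le\|\varphi\|_{L^2(\partial U_e,\mu_e)}$, Lemma 3.7), and the closedness of $(\E_0,\F)$ should be read as the statement that this inequality together with regularity of the ambient Dirichlet space suffices to make $\F$, with the $\E_0$-graph norm, complete.
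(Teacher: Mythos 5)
The paper in fact gives \emph{no proof} of this lemma: the statement is immediately followed by the Remark defining the operator $B_\lambda$ and then by Section 4. So there is nothing in the paper to compare your argument against, and the review can only assess your proposal on its own merits.

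Your proposal is structured sensibly and most of it goes through. The bounded-perturbation step for $(\E_\lambda,\F)$ is correct: by Lemma 3.7 the difference $\E_0-\E_\lambda$ is a bounded symmetric form on $L^2(\partial U_e,\mu_e)$ with norm at most $\|\lambda A_\lambda\|$, so once $(\E_0,\F)$ is closed and nonnegative, $(\E_\lambda,\F)$ is closed, lower semibounded, and the displayed inequality is exactly the bound from Lemma 3.7. Nonnegativity of $\E_0$ is immediate from $\E_0[\varphi]=\D_{U_e}[L_0\varphi]\ge 0$. Your Markov-property argument also works: $(0\vee L_0\varphi)\wedge 1$ is a competitor in $C_\psi$ with $\psi=(0\vee\varphi)\wedge 1$, truncation does not increase the Dirichlet integral, and $L_0\psi$ minimizes $\D_{U_e}$ on $C_\psi$ by Lemma 3.3, giving $\E_0[\psi]\le\E_0[\varphi]$.

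The genuine gap is the one you flag yourself: closedness of $(\E_0,\F)$. Given an $\E_0$-Cauchy sequence $\varphi_n\to\varphi$ in $L^2(\partial U_e,\mu_e)$, you correctly transport $u_n:=L_0\varphi_n$ into $H^1(U_e,m_e)$, conclude $u_n\to u$ in $H^1(U_e,m_e)$ by Lemma 2.1, verify $u\in\mathcal H_0$, and also conclude $u=L_0\varphi$ in $L^2(U_e,m_e)$ because $L_0$ is a contraction. But to finish you must know $\mathrm{Tr}(u)=\varphi$, and this does not follow from anything proved in Lemmas 3.1--3.7. It would follow either from continuity of $\mathrm{Tr}:\mathcal H_0\to L^2(\partial U_e,\mu_e)$ for the $H^1$-norm, or from injectivity of $L_0:L^2(\partial U_e,\mu_e)\to L^2(U_e,m_e)$ (since $L_0\mathrm{Tr}(u)=u=L_0\varphi$ would then give $\mathrm{Tr}(u)=\varphi$). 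Neither is established; the contraction estimate of Lemma 3.7 points in the wrong direction, and Lemma 3.3 only gives an algebraic bijection $L_0:\F\to\mathcal H_0$ without norm control. To close the gap one needs genuine potential-theoretic input, for instance a Fatou-type statement that an $L^1(\partial U_e,\mu_e)$-boundary datum is recovered a.e.\ as the boundary value of its harmonic extension (which would give injectivity of $L_0$); the Poisson-kernel estimates invoked in Section 5.2 are in that spirit, but they are proved under the additional smoothness assumptions on $\partial U_e$ that are not part of the hypotheses here. In short: the argument is the right one, the obstacle you name is the real obstacle, and an explicit lemma supplying one of the two missing properties should be added before the proof can be considered complete.
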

\begin{remark}
For $\lambda\notin \sigma(\D_{U_e},H^1_0(U_e,m_e))$, the self-adjoint operator $B_\lambda$ on the Hilbert space $L^2(\partial U_e,\mu_e)$ whose quadratic form is $(\E_\lambda,\F)$ will be called the {\it $\lambda$-absobtion-to-reflection} operator of the Dirichlet space $(\D_{U_e},H^1(U_e,m_e))$. In particular, if $0\notin \sigma(\D_{U_e},H^1_0(U_e,m_e))$, the nonnegative, self-adjoint operator $B_0$ will be called the {\it absorption-to-reflection} operator of the Dirichlet space $(\D_{U_e},H^1(U_e,m_e))$.
\end{remark}
\section{Comparison of eigenvalues counting functions}
The following is the main result of the work.\\
Let us denote by  $N((\E_\lambda,\F),0)$ the number of nonpositive eigenvalues of the quadratic form $(\E_\lambda,\F)$ of the {\it $\lambda$-absorption-to-reflection} operator $B_\lambda$ on the Hilbert space $L^2(\partial U_e,\mu_e)$.\vskip0.2truecm\noindent
Notice that the set of $\lambda\ge 1$ for which $\lambda\notin \sigma(\D_{U_e},H^1_0(U_e,m_e))$ is not empty if, for example,
$\sigma(\D_{U_e},H^1_0(U_e,m_e))$ is discrete.
\begin{theorem}
Assume the Dirichlet space $(\D_{U_e},H^1(U_e,m_e))$ on $L^2(\bar{U_e},{\bf 1}_{{\bar{U_e}}}\cdot dx)$ to be regular. Then, for any $\lambda\notin \sigma(\D_{U_e},H^1_0(U_e,m_e))$ we have
\begin{equation}
N((\D_{U_e},H^1(U_e,m_e)),\lambda) =N((\D_{U_e},H^1_0(U_e,m_e)),\lambda) + N((\E_\lambda,\F),0)
\end{equation}
and
\begin{equation}
N((\D_{U_e},H^1(U_e,m_e)),\lambda) \le N((\D_{U_e},H^1_0(U_e,m_e)),\lambda) + N((\E_0,\F),\|\lambda\cdot A_\lambda\|)\, ,
\end{equation}
If moreover $\lambda\ge 1$, we have
\begin{equation}
\begin{split}
N((\Q_V,H^1_0(\O,dx)),e)
&\le N((\D_{U_e},H^1_0(U_e,m_e)),\lambda) + N((\E_0,\F),\lambda\|A_\lambda\|)\, .
\end{split}
\end{equation}
\end{theorem}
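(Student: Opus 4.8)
The three assertions carry unequal weight: (4.2) and (4.3) follow quickly from (4.1) together with results already in hand, so the heart of the matter is the splitting (4.1), and I would prove it by reducing it to a statement about the sign of a single indefinite quadratic form. Introduce
\[
q_\lambda[u]:=\D_{U_e}[u]-\lambda\|u\|^2_{L^2(U_e,m_e)},\qquad u\in H^1(U_e,m_e),
\]
and, for a subspace $W$ of a form domain, let $\nu(q_\lambda;W)$ be the largest dimension of a subspace $W'\subseteq W$ with $q_\lambda|_{W'}\le 0$. Since $(\D_{U_e},H^1(U_e,m_e))$ and $(\D_{U_e},H^1_0(U_e,m_e))$ are closed on $L^2(U_e,m_e)$ (Lemma 2.1) and $(\E_\lambda,\F)$ is closed and lower semibounded on $L^2(\partial U_e,\mu_e)$ (Lemma 3.7), the min--max principle gives $N((\D_{U_e},H^1(U_e,m_e)),\lambda)=\nu(q_\lambda;H^1(U_e,m_e))$ and $N((\D_{U_e},H^1_0(U_e,m_e)),\lambda)=\nu(q_\lambda;H^1_0(U_e,m_e))=:n_0$, while $N((\E_\lambda,\F),0)$ equals the largest dimension of a subspace $S\subseteq\F$ with $\E_\lambda|_S\le 0$; since $L_\lambda\colon\F\to\H_\lambda$ is a linear isomorphism (Lemmas 3.3--3.4) and $\E_\lambda[\varphi]=q_\lambda[L_\lambda\varphi]$ by definition (Lemma 3.6), this last quantity equals $\nu(q_\lambda;\H_\lambda)$. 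Thus (4.1) is equivalent to the linear-algebraic identity $\nu(q_\lambda;H^1(U_e,m_e))=n_0+\nu(q_\lambda;\H_\lambda)$.

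The key geometric input is that, by Lemma 3.1, $H^1(U_e,m_e)=H^1_0(U_e,m_e)\oplus\H_\lambda$ and that this direct sum is \emph{$q_\lambda$-orthogonal}: for $v\in H^1_0(U_e,m_e)$ and $u\in\H_\lambda$ the defining relation of $\H_\lambda$ reads precisely $q_\lambda(v|u)=\D_{U_e}(v|u)-\lambda(v|u)_{L^2(U_e,m_e)}=0$. Since $\lambda$ lies in the resolvent set of the Dirichlet operator $-\Delta$ (whose spectrum is $\sigma(\D_{U_e},H^1_0(U_e,m_e))$), let $P$ be its spectral projection onto $[0,\lambda)$. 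Then $M_-:=P\,H^1_0(U_e,m_e)$ is finite dimensional of dimension $n_0$ with $q_\lambda$ negative definite on it, while $M_+:=(I-P)H^1_0(U_e,m_e)$ satisfies $q_\lambda[u]\ge\delta\|u\|^2_{L^2(U_e,m_e)}$ with $\delta:=\mathrm{dist}(\lambda,\sigma(-\Delta))>0$; moreover $M_-$, $M_+$ and $\H_\lambda$ are pairwise $q_\lambda$-orthogonal and $H^1(U_e,m_e)=M_-\oplus M_+\oplus\H_\lambda$.

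Now for the inertia count. The inequality $\nu(q_\lambda;H^1(U_e,m_e))\ge n_0+\nu(q_\lambda;\H_\lambda)$ is immediate: for any $V\subseteq\H_\lambda$ with $q_\lambda|_V\le 0$ one has, on $M_-\oplus V$, $q_\lambda[w+v]=q_\lambda[w]+q_\lambda[v]\le 0$ by $q_\lambda$-orthogonality and negative definiteness on $M_-$, and the dimensions add. For the converse, let $F\subseteq H^1(U_e,m_e)$ satisfy $q_\lambda|_F\le 0$. Projecting $F$ onto $M_-\oplus\H_\lambda$ \emph{along $M_+$} writes $f=f_-+f_++f_1$ and sends $f$ to $f_-+f_1$, and by $q_\lambda$-orthogonality $q_\lambda[f_-+f_1]=q_\lambda[f]-q_\lambda[f_+]\le q_\lambda[f]\le 0$; this projection is injective on $F$, since any $f\in F\cap M_+$ would satisfy both $q_\lambda[f]\le 0$ and $q_\lambda[f]\ge\delta\|f\|^2_{L^2(U_e,m_e)}$, forcing $f=0$. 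So I may assume $F\subseteq M_-\oplus\H_\lambda$. Projecting $F$ onto the finite-dimensional $M_-$ along $\H_\lambda$ has kernel $F\cap\H_\lambda$, on which $q_\lambda\le 0$; by rank--nullity $\dim F\le\dim M_-+\dim(F\cap\H_\lambda)\le n_0+\nu(q_\lambda;\H_\lambda)$. Taking the supremum over admissible $F$ proves (4.1).

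It remains to harvest the corollaries. For (4.2), Lemma 3.8 gives $\E_\lambda[\varphi]\ge\E_0[\varphi]-\|\lambda\cdot A_\lambda\|\,\|\varphi\|^2_{L^2(\partial U_e,\mu_e)}$ for $\varphi\in\F$, so every subspace of $\F$ on which $\E_\lambda\le 0$ is one on which $\E_0[\varphi]\le\|\lambda\cdot A_\lambda\|\,\|\varphi\|^2_{L^2(\partial U_e,\mu_e)}$; the min--max principle then yields $N((\E_\lambda,\F),0)\le N((\E_0,\F),\|\lambda\cdot A_\lambda\|)$, and substituting this into (4.1) gives (4.2). For (4.3), when $\lambda\ge 1$ Lemma 2.2 gives $N((\Q_V,H^1_0(\O,dx)),e)\le N((\D_{U_e},H^1(U_e,m_e)),\lambda)$, and since $\lambda>0$ we have $\|\lambda\cdot A_\lambda\|=\lambda\|A_\lambda\|$, so chaining this with (4.2) completes the proof. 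The one genuinely delicate step is the converse inequality in the inertia count: the natural temptation is to project away the negative part $M_-$ as well, but that operation \emph{raises} $q_\lambda$ and can wreck the condition $q_\lambda\le 0$; the correct move is to absorb only the positive part by projection (which lowers $q_\lambda$) and to handle the finite-dimensional negative part by a bare dimension count. A secondary technical point is the legitimacy of the min--max identifications, which is precisely where the closedness and semiboundedness of Lemmas 2.1 and 3.7, and the discreteness of the relevant parts of the spectra (as in the remark preceding the theorem), enter.
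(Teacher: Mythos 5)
Your proof follows the same strategic backbone as the paper: introduce $q_\lambda=\mathcal{L}_\lambda$, exploit the $q_\lambda$-orthogonal splitting $H^1(U_e,m_e)=H^1_0(U_e,m_e)\oplus\H_\lambda$, carry $\H_\lambda$ over to $(\E_\lambda,\F)$ via the isomorphism $L_\lambda$, and deduce (4.2) from Lemma~3.8 and (4.3) from Lemma~2.2. Where you genuinely diverge is in the inertia count, and the divergence is an improvement. The paper defines $M$, $M_0$, $\mathcal{M}_\lambda$ as "the subspace where $\mathcal{L}_\lambda$ is negative" in the three ambient spaces, sets $M_0'=M\cap H^1_0(U_e,m_e)$, $\mathcal{M}'_\lambda=M\cap\mathcal{M}_\lambda$, and asserts $\dim M=\dim M_0'+\dim\mathcal{M}'_\lambda$; but a maximal nonpositive subspace $M$ need not decompose along the direct sum $H^1_0(U_e,m_e)\oplus\H_\lambda$, and the stated equality is unjustified as written. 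Your argument repairs exactly this point: you split $H^1_0(U_e,m_e)=M_-\oplus M_+$ by the spectral projection at $\lambda$, observe that projecting along the strictly positive part $M_+$ is $q_\lambda$-decreasing and injective on any nonpositive $F$, and then finish with a rank--nullity bound after projecting onto the finite-dimensional $M_-$. This is the standard rigorous route (essentially Sylvester-type inertia via a three-block $q_\lambda$-orthogonal decomposition), and it avoids the unsupported intersection-dimension claim. One small caveat you share with the paper: the dimension arithmetic tacitly assumes $N((\D_{U_e},H^1_0(U_e,m_e)),\lambda)$ is finite, which the theorem's hypotheses do not obviously guarantee; both proofs implicitly lean on discreteness of the spectrum below $\lambda$, which is ensured only under the additional integrability hypotheses of Section~5. (You also correctly use $L_\lambda$ rather than $L_0$ for the isomorphism onto $\H_\lambda$; the paper's proof writes $L_0$ at that point, which appears to be a slip.)
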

\begin{proof}
Consider the closed, quadratic form
\[
\mathcal{L}_\lambda:H^1(U_e,m_e)\to [0,+\infty)\qquad \mathcal{L}_\lambda[u]:=\D_{U_e}[u]-\lambda\cdot \|u\|_{L^2(U_e,m_e)}
\]
and notice that the direct splitting $H^1(U_e,m_e)=H^1_0(U_e,m_e)\oplus \H_\lambda$ is $\mathcal{L}_\lambda$-orthogonal
\[
\mathcal{L}_\lambda[u_0 + u^\lambda]=\mathcal{L}_\lambda[u_0]+\mathcal{L}_\lambda[u^\lambda]\qquad u_0\oplus u^\lambda\in H^1_0(U_e,m_e)\oplus \H_\lambda\, .
\]
Let $M\subset H^1(U_e,m_e)$ (resp. $M_0\subset H^1_0(U_e,m_e)$, $\mathcal{M}_\lambda\subset\H_\lambda$) be the subspace where $\mathcal{L}_\lambda$ is negative on $H^1(U_e,m_e)$ (resp. $H^1_0(U_e,m_e)$, $\H_\lambda$) so that $N((\D_{U_e},H^1(U_e,m_e)),\lambda)={\rm dim}(M)$ and $N((\D_{U_e},H^1_0(U_e,m_e)),\lambda)={\rm dim}(M_0)$. If we consider $M_0^\prime:=M\cap H^1_0(U_e,m_e)$ and $\mathcal{M}_\lambda^\prime :=M\cap \mathcal{M}_\lambda$, we have $M_0^\prime\subseteq M_0$ and $\mathcal{M}_\lambda^\prime\subseteq \mathcal{M}_\lambda$ so that ${\rm dim}(M)={\rm dim}(M_0^\prime)+{\rm dim}(\mathcal{M}_\lambda^\prime)\le {\rm dim}(M_0)+{\rm dim}(\mathcal{M}_\lambda)$. On the other hand, since $M_0$ and $\mathcal{M}_\lambda$ are $\mathcal{L}_\lambda$-orthogonal, we have $M_0\oplus \mathcal{M}_\lambda\subseteq M$ so that ${\rm dim}(M_0)+{\rm dim}(\mathcal{M}_\lambda)\le {\rm dim}(M)$ and
\[
{\rm dim}(M)-{\rm dim}(M_0)={\rm dim}(\mathcal{M}_\lambda)\, .
\]
Notice now that the quadratic form $(\mathcal{L}_\lambda,\H_\lambda)$ considered on the Hilbert space $L^2(U_e,m_e)$ is isomorphic under the Markovian map $L_0:L^2(\partial U_e,\mu_e)\to L^2(U_e,m_e)$ to the quadratic form $(\E_\lambda,\F)$ on the Hilbert space $L^2(\partial U_e,\mu_e)$
\[
\E_\lambda[\varphi]:=\mathcal{L}_\lambda[L_0\varphi]\qquad \F:=\F_{\partial U_e}\cap L^2(\partial U_e,\mu_e)\, .
\]
Since the quadratic form $(\E_\lambda,\F)$ on the Hilbert space $L^2(\partial U_e,\mu_e)$ is closed, we have ${\rm dim}(\mathcal{M}_\lambda)=N((\E_\lambda,\F),0)$ so that ${\rm dim}(M)-{\rm dim}(M_0)=N((\E_\lambda,\F),0)$. The other bounds follow from previous lemma.
\end{proof}

\section{Weyl's type bounds on eigenvalues counting functions}
In the following sections we provide Weyl's type bounds on the eigenvalues counting function of the Dirichlet forms $(\D,H^1_0(U_e,m_e))$ on $L^2(U_e,m_e)$ and $(\E_0,\F)$ on $L^2(\partial U_e,\mu_e)$.
\subsection{}
In this section we show how to bound the first term $N((\D_{U_e},H^1_0(U_e,m_e)),\lambda)$ in the above evaluation of $N((\D_{U_e},H^1(U_e,m_e)),\lambda)$.\\
Under an hypothesis of $L^p$-integrability of $(V-e)_-$, for some $p>n/2$ and for $n\ge 3$, we prove initially a Sobolev inequality for the Dirichlet form
$(\D_{U_e},H^1_0(U_e,m_e))$ on the weighted Lebesgue space $L^2(U_e,m_e)$.\\
Then, using the Davies-Simon theory [7], [6], we turn these into a family of logarithmic Sobolev inequalities to prove that the Markov semigroup associated to $(\D_{U_e},H^1_0(U_e,m_e))$ is ultracontractive on $L^2(U_e,m_e)$.\\
Finally, assuming $(V-e)_-$ to be integrable, we show that the Markovian semigroup is nuclear so that the spectrum of its generator is discrete and that an upper bound of Weyl's type holds true on $N((\D_{U_e},H^1_0(U_e,m_e)),\lambda)$.
\vskip0.2truecm\noindent
Recall the Sobolev inequality for the Euclidean domain $U_e\subseteq\R^n$, $n\ge 3$,
\[
\|u\|^2_{L^{n^*}(U_e,dx)}\le S_n\cdot \D_{U_e}[u]\qquad u\in H^1_0(U_e,dx)\, ,
\]
where the best constant is given by
\[
S_n:=\frac{1}{n(n-2)\pi}\Bigl(\frac{\Gamma(n)}{\Gamma(n/2)}\Bigr)^{2/n}\, .
\]
\begin{lemma}
Suppose $(V-e)_-\in L^p(\O,dx)$ for some $p>n/2$ and set $n^*:=\frac{2n}{(n-2)}$, $n\ge 3$. Then the following weighted Sobolev inequality holds true \[
\|u\|^2_{L^r(U_e,m_e)}\le S_r(U_e,m_e)\cdot \D_{U_e}[u]\qquad u\in H^1_0(U_e,m_e)
\]
for $r:=n^*(1-p^{-1})>2$ and the Sobolev constant $S_r(U_e,m_e):= S_n\cdot \|(V-e)_-\|^{2/r}_{L^p(\O,dx)}$.
\end{lemma}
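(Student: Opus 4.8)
The plan is to obtain the weighted Sobolev inequality from the classical Euclidean one by a single, forced application of H\"older's inequality, designed to convert the $m_e$-weighted $L^r$-norm on $U_e$ into the unweighted $L^{n^*}$-norm, where the classical Sobolev inequality on $U_e$ is available. Writing $dm_e=(V-e)_-\,dx$ and splitting $\int_{U_e}|u|^r\,dm_e=\int_{U_e}|u|^r\,(V-e)_-\,dx$ via H\"older with the conjugate pair $n^*/r$ and $p$ works exactly when $\tfrac{r}{n^*}+\tfrac1p=1$, i.e. when $r=n^*(1-p\m1)$. The hypothesis $p>n/2$ forces $1-p\m1>1-2/n=(n-2)/n$, hence $r>n^*(n-2)/n=2$, while $p\m1>0$ forces $r<n^*$, so the exponent $n^*/r$ is genuinely $>1$ and the H\"older step is legitimate.

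For $u$ ranging over a form core of $(\D_{U_e},H^1_0(U_e,m_e))$ whose elements also lie in $H^1_0(U_e,dx)$ (so that the Euclidean Sobolev inequality applies), the core computation reads
\[
\|u\|^r_{L^r(U_e,m_e)}=\int_{U_e}|u|^r\,(V-e)_-\,dx\le \Bigl(\int_{U_e}|u|^{n^*}\,dx\Bigr)^{r/n^*}\Bigl(\int_{U_e}(V-e)_-^{\,p}\,dx\Bigr)^{1/p}=\|u\|^r_{L^{n^*}(U_e,dx)}\cdot\|(V-e)_-\|_{L^p(U_e,dx)},
\]
where we used that $(V-e)_-$ vanishes off $U_e$, so that $\|(V-e)_-\|_{L^p(U_e,dx)}=\|(V-e)_-\|_{L^p(\O,dx)}$. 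Taking $r$-th roots, squaring, and invoking $\|u\|^2_{L^{n^*}(U_e,dx)}\le S_n\cdot\D_{U_e}[u]$ applied to $|u|$ (using $|\nabla|u||=|\nabla u|$ a.e.) gives
\[
\|u\|^2_{L^r(U_e,m_e)}\le \|u\|^2_{L^{n^*}(U_e,dx)}\cdot\|(V-e)_-\|^{2/r}_{L^p(\O,dx)}\le S_n\cdot\|(V-e)_-\|^{2/r}_{L^p(\O,dx)}\cdot\D_{U_e}[u]=S_r(U_e,m_e)\cdot\D_{U_e}[u].
\]

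It then remains to propagate the bound from the core to all of $H^1_0(U_e,m_e)$. Given $u\in H^1_0(U_e,m_e)$, choose $u_k$ in the core with $u_k\to u$ in the graph norm of $H^1(U_e,m_e)$; then $\D_{U_e}[u_k]\to\D_{U_e}[u]$ and, along a subsequence, $u_k\to u$ $m_e$-a.e., hence $dx$-a.e. since $m_e$ and $dx$ are equivalent on $U_e$. The inequality for the $u_k$ bounds $(u_k)$ in $L^r(U_e,m_e)$, so Fatou's lemma gives $u\in L^r(U_e,m_e)$ with $\|u\|^2_{L^r(U_e,m_e)}\le\liminf_k\|u_k\|^2_{L^r(U_e,m_e)}\le S_r(U_e,m_e)\cdot\lim_k\D_{U_e}[u_k]=S_r(U_e,m_e)\cdot\D_{U_e}[u]$.

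The arithmetic is routine; the only delicate point is the approximation step — namely verifying that the generators of $H^1_0(U_e,m_e)$ (closures of continuous $H^1(U_e,m_e)$-functions) can be taken inside the Euclidean space $H^1_0(U_e,dx)$, so that the classical Sobolev inequality is indeed applicable, and that graph-norm convergence in $H^1(U_e,m_e)$ cooperates with convergence of Dirichlet energies and with $m_e$-a.e. convergence. This is precisely the sort of reconciliation between the ``$m_e$-Dirichlet'' and ``Euclidean-Dirichlet'' descriptions already exploited in the proof of Lemma~2.1, and I would handle it in the same spirit.
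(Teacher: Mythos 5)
Your argument is correct and follows essentially the same path as the paper's: a single H\"older split with the conjugate pair $q=(1-p^{-1})^{-1}=n^*/r$ and $p$ to pass from $\|u\|_{L^r(U_e,m_e)}$ to $\|u\|_{L^{n^*}(U_e,dx)}$, then the classical Sobolev inequality. The extra density/Fatou step you add to propagate the bound from a core to all of $H^1_0(U_e,m_e)$ is a reasonable tightening of what the paper leaves implicit, not a different method.
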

\begin{proof}
Setting $q:=(1-p^{-1})^{-1}$ we have $rq=n^*$ so that
\[
\begin{split}
\|u\|^2_{L^r(U_e,m_e)}&=\Bigl(\int_{U_e}|u|^r (V-e)_-\, dx\Bigr)^{2/r} \le \Bigl(\int_{U_e}|u|^{rq}\, dx\Bigr)^{2/rq}\cdot  \Bigl(\int_{U_e}(V-e)_-^p\, dx\Bigr)^{2/rp}\\
&=\|u\|^2_{L^{n^*}(U_e,dx)}\cdot \|(V-e)_-\|^{2/r}_{L^p(\O,dx)} \\
&\le S_n\cdot \|(V-e)_-\|^{2/r}_{L^p(\O,dx)} \cdot \D_{U_e}[u]\qquad u\in H^1_0(U_e,m_e)\, .
\end{split}
\]
\end{proof}
Let $L_e$ be the nonnegative, self-adjoint operator on $L^2(U_e,m_e)$ whose quadratic form is the Dirichlet form $(D_{U_e},H^1_0(U_e,m_e))$.
\begin{lemma}
Suppose $(V-e)_-\in L^p(\O,dx)$, for some $p>n/2$, $n\ge 3$. The Markovian semigroup $e^{-tL_e}$ on $L^2(U_e,m_e)$ is then ultracontractive
\[
\|e^{-tL_e}\|_{L^2\to L^\infty}\le \Bigl({\rm e}(d/4)S_r(U_e,m_e)\Bigr)^{d/4}\cdot t^{-d/4}\qquad t>0
\]
and its heat kernel is bounded by
\[
e^{-tL_e}(x,y)\le c\cdot t^{-d/2}\qquad m_e-a.e.\quad x,y\in U_e\, ,
\]
where $d:=\frac{2r}{r-2}>n$ and $c:=\Bigl({\rm e}\frac{d}{2}S_r(U_e,m_e)\Bigr)^{d/2}$.
\end{lemma}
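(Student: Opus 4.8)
The plan is to carry out the standard Davies--Simon passage from a Sobolev inequality to an ultracontractive heat-kernel bound, applied to the weighted Dirichlet form $(\D_{U_e},H^1_0(U_e,m_e))$ on $L^2(U_e,m_e)$. By Lemma 2.1 this is a genuine Dirichlet form, so $e^{-tL_e}$ is a symmetric, positivity-preserving, sub-Markovian semigroup, and there is no zeroth-order term to worry about. Writing $d:=\frac{2r}{r-2}$, the inequality of Lemma 5.1 takes the form
\[
\|u\|^2_{L^{2d/(d-2)}(U_e,m_e)}\le S_r(U_e,m_e)\cdot \D_{U_e}[u]\, ,\qquad u\in H^1_0(U_e,m_e)\, ,
\]
and $r=n^*(1-p^{-1})$ satisfies $2<r<n^*$, which is exactly what makes $d$ finite and $d>n$.

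First I would convert this into a one-parameter family of logarithmic Sobolev inequalities. For $u$ with $\|u\|_{L^2(U_e,m_e)}=1$, applying Jensen's inequality (concavity of $\log$) to the probability measure $|u|^2\,m_e$ yields
\[
\int_{U_e}|u|^2\log|u|^2\,dm_e\le \frac{d}{2}\log\|u\|^2_{L^{2d/(d-2)}(U_e,m_e)}\le \frac{d}{2}\log\bigl(S_r(U_e,m_e)\,\D_{U_e}[u]\bigr)\, .
\]
Linearizing the logarithm by $\log\tau\le\tau/s+\log s-1$ and optimizing over $s>0$ then gives, for every $\e>0$ and every $u\in H^1_0(U_e,m_e)$,
\[
\int_{U_e}|u|^2\log|u|\,dm_e\le \e\,\D_{U_e}[u]+\beta(\e)\,\|u\|^2_{L^2(U_e,m_e)}+\|u\|^2_{L^2(U_e,m_e)}\log\|u\|_{L^2(U_e,m_e)}\, ,
\]
with $\beta(\e):=\frac{d}{4}\log\!\bigl(\frac{d\,S_r(U_e,m_e)}{4{\rm e}\,\e}\bigr)$, a strictly decreasing function of $\e$. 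Since there is no potential term, $\e$ may range over all of $(0,+\infty)$.

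Next I would feed this family into the Davies--Simon ultracontractivity theorem [7], [6]: because $\beta$ is decreasing one obtains $\|e^{-tL_e}\|_{L^2\to L^\infty}\le e^{M(t)}$, where $M(t)$ is the quantity produced by the differential inequality for $s\mapsto\|e^{-sL_e}f\|_{p(s)}$ under the admissible choice $p(s)^{-1}=\frac12(1-s/t)$. Evaluating $M(t)$ for the explicit power-type $\beta$ above is a routine (though constant-heavy) computation — it reduces to $\int_0^1\log(1-w^2)\,dw=2\log 2-2$ — and produces $M(t)\le \frac{d}{4}\log\!\bigl({\rm e}(d/4)S_r(U_e,m_e)/t\bigr)$, which is exactly the claimed $L^2\to L^\infty$ bound.

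The heat-kernel estimate then follows by duality and the semigroup property: $L_e$ is self-adjoint, so $\|e^{-(t/2)L_e}\|_{L^1\to L^2}=\|e^{-(t/2)L_e}\|_{L^2\to L^\infty}$, and hence
\[
\|e^{-tL_e}\|_{L^1\to L^\infty}\le\|e^{-(t/2)L_e}\|_{L^2\to L^\infty}\,\|e^{-(t/2)L_e}\|_{L^1\to L^2}\le\bigl({\rm e}(d/4)S_r(U_e,m_e)\bigr)^{d/2}(t/2)^{-d/2}=c\cdot t^{-d/2}\, .
\]
Since $e^{-tL_e}$ is positivity preserving and bounded from $L^1$ to $L^\infty$, the Dunford--Pettis criterion makes it an integral operator whose kernel satisfies $e^{-tL_e}(x,y)\le\|e^{-tL_e}\|_{L^1\to L^\infty}\le c\cdot t^{-d/2}$ for $m_e\otimes m_e$-a.e. $(x,y)$. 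I expect the only real obstacle to be the bookkeeping of constants inside the Davies--Simon optimization — verifying that the chosen exponent path makes $M(t)$ collapse to precisely the stated expression — whereas the Sobolev-to-log-Sobolev step and the final duality step are routine.
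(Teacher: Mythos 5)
Your argument follows the same Davies--Simon route as the paper: Lemma 5.1's weighted Sobolev inequality is converted (via Jensen/linearization of the logarithm) into an $\varepsilon$-parametrized family of logarithmic Sobolev inequalities with $\beta(\varepsilon)=\tfrac{d}{4}\log(\mathrm{const}/\varepsilon)$, then fed into the ultracontractivity theorem of [6] to obtain the $L^2\to L^\infty$ bound, and finally converted to a pointwise heat-kernel bound by the $L^1\to L^2\to L^\infty$ duality-and-semigroup factorization (the paper cites [6, Lemma 2.1.2] for this last step, which is the same fact you derive via Dunford--Pettis). The constants work out identically, so this is essentially the paper's own proof with the intermediate steps spelled out.
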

\begin{proof}
Ultracontractivity follows from the weighted Sobolev inequality applying [6 Thm 2.4.2]
\[
\|e^{-tL_e}\|_{L^2\to L^\infty}\le c\cdot t^{-\frac{2r}{r-2}}\qquad t> 0\, .
\]
To evaluate explicitly the constant $c>0$, notice that, following the proof of [6 Thm 2.4.2], the Sobolev inequality implies the logarithmic Sobolev inequalities
\[
\int_{U_e}|u|^2\ln |u|\, dm_e \le (d/4)\cdot\Bigl(-\ln\varepsilon +\varepsilon\cdot S_r(U_e,m_e)\cdot\D_{U_e}[u]\Bigr)\qquad \varepsilon >0
\]
for all norm one functions $u\in H^1_0(U_e,m_e)$ and $d:=\frac{2r}{r-2}$. By rescaling, these appear as
\[
\int_{U_e}|u|^2\ln |u|\, dm_e \le \varepsilon\cdot\D_{U_e}[u]+\beta(\varepsilon)\qquad \varepsilon >0
\]
for $\beta(\varepsilon):=(d/4)\ln (rS_r(U_e,m_e)/4)-(d/4)\ln \varepsilon$. By [6 Corollary 2.2.8] we have
\[
\|e^{-tL_e}\|_{L^2\to L^\infty}\le e^{\frac{1}{t}\int_0^t\beta(\varepsilon)\, d\varepsilon} = \Bigl({\rm e}(d/4)S_r(U_e,m_e)\Bigr)^{d/4}\cdot t^{-d/4}\qquad t>0\, .
\]
By [6 Lemma 2.1.2] we then have the stated uniform upper bound on the heat kernel.
\end{proof}
\begin{lemma}
Suppose $(V-e)_-\in L^1(\O,dx)\cap L^p(\O,dx)$ for some $p>n/2$, $n\ge 3$. Then
\begin{itemize}
\item the spectrum of the Dirichlet form  $(D_{U_e},H^1_0(U_e,m_e))$ on $L^2(U_e,m_e)$ is discrete,
\item the associated Markovian semigroup is nuclear and
\[
{\rm Tr}(e^{-tL_e})\le \|(V-e)_-\|_{L^1(\O,dx)}^2\cdot \Bigl(edS_r(U_e,m_e)\Bigr)^{d}\cdot t^{-d}\qquad t>0.
\]
\end{itemize}
\end{lemma}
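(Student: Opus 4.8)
The plan is to combine the uniform heat-kernel bound of Lemma 5.2 with the finiteness of the weighted measure $m_e$. Indeed, the hypothesis $(V-e)_-\in L^1(\O,dx)$ gives
\[
m_e(U_e)=\int_{U_e}(V-e)_-\,dx\le\|(V-e)_-\|_{L^1(\O,dx)}<+\infty\, ,
\]
so $m_e$ is a finite measure on $U_e$, a fact that was not needed for the ultracontractivity statement but becomes decisive for nuclearity.

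First I would factor the semigroup as $e^{-tL_e}=e^{-(t/2)L_e}\circ e^{-(t/2)L_e}$. By Lemma 5.2 the operator $e^{-(t/2)L_e}$ acts on $L^2(U_e,m_e)$ as an integral operator whose kernel is nonnegative and satisfies $e^{-(t/2)L_e}(x,y)\le c\cdot(t/2)^{-d/2}$ for $m_e$-a.e. $x,y\in U_e$, with $c=\bigl({\rm e}\tfrac{d}{2}S_r(U_e,m_e)\bigr)^{d/2}$. Since $m_e$ is finite, this bounded kernel lies in $L^2(U_e\times U_e,\,m_e\otimes m_e)$; hence $e^{-(t/2)L_e}$ is Hilbert--Schmidt and $e^{-tL_e}$ is trace class, i.e. the Markovian semigroup is nuclear for every $t>0$.

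Next I would evaluate the trace. For the nonnegative self-adjoint operator $L_e$ with eigenvalues $\mu_k$ the spectral theorem gives ${\rm Tr}(e^{-tL_e})=\sum_k e^{-t\mu_k}=\sum_k\bigl(e^{-(t/2)\mu_k}\bigr)^2=\|e^{-(t/2)L_e}\|^2_{\rm HS}$, and the Hilbert--Schmidt norm equals the $L^2(m_e\otimes m_e)$-norm of the kernel:
\[
{\rm Tr}(e^{-tL_e})=\int_{U_e}\!\int_{U_e}\bigl|e^{-(t/2)L_e}(x,y)\bigr|^2\,m_e(dx)\,m_e(dy)\le\bigl(c\,(t/2)^{-d/2}\bigr)^2\cdot m_e(U_e)^2\, .
\]
Inserting $c^2=\bigl({\rm e}\tfrac{d}{2}S_r(U_e,m_e)\bigr)^d$, using $(t/2)^{-d}=2^d\,t^{-d}$, and bounding $m_e(U_e)\le\|(V-e)_-\|_{L^1(\O,dx)}$ yields precisely
\[
{\rm Tr}(e^{-tL_e})\le\|(V-e)_-\|^2_{L^1(\O,dx)}\cdot\bigl({\rm e}\,d\,S_r(U_e,m_e)\bigr)^d\cdot t^{-d}\, .
\]

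Finally, discreteness of the spectrum follows at once from nuclearity: for any fixed $t>0$ the convergence of $\sum_k e^{-t\mu_k}$ forces $\mu_k\to+\infty$, so every set $\{k:\mu_k\le\lambda\}$ is finite and $L_e$ has compact resolvent. I do not expect a genuine obstacle here; the only point deserving a word of justification is the chain ${\rm Tr}(e^{-tL_e})=\|e^{-(t/2)L_e}\|^2_{\rm HS}$ together with the representation of the Hilbert--Schmidt norm as the squared $L^2$-norm of the heat kernel, but this is exactly the Davies--Simon machinery ([6]) already invoked for Lemma 5.2 and uses nothing beyond self-adjointness, positivity of the semigroup, and the finiteness of $m_e$.
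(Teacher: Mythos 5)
Your proof is correct and reaches the paper's bound by the same underlying Davies--Simon mechanism; the only difference is one of packaging. The paper cites [6, Thm 2.1.4] as a black box, which delivers, from the ultracontractivity estimate $c(t):=\|e^{-tL_e}\|_{L^2\to L^\infty}$ and the finiteness of $m_e$, the inequality ${\rm Tr}(e^{-tL_e})\le m_e(U_e)^2\,c(t/4)^4$; plugging in $c(t)\le\bigl({\rm e}(d/4)S_r(U_e,m_e)\bigr)^{d/4}t^{-d/4}$ from Lemma 5.2 then gives $\bigl({\rm e}\,d\,S_r\bigr)^{d}t^{-d}\,m_e(U_e)^2$. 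You instead use the on-diagonal heat kernel bound $e^{-tL_e}(x,y)\le c\,t^{-d/2}$ (which Lemma 5.2 had already extracted from the same $c(t)$ via the $L^1\to L^\infty$ norm), factor $e^{-tL_e}$ as the square of $e^{-(t/2)L_e}$, and compute the Hilbert--Schmidt norm directly as the $L^2(m_e\otimes m_e)$-norm of the bounded kernel on a finite measure space. Both routes yield exactly the same constant $\bigl({\rm e}\,d\,S_r(U_e,m_e)\bigr)^{d}$, and your version has the minor virtue of being self-contained where the paper defers to Davies' book (the paper's reference to ``Lemma 2.12'' appears to be a stray citation in any case). Your discreteness argument — nuclearity forces $\mu_k\to+\infty$, hence compact resolvent — is standard and matches what [6, Thm 2.1.4] encapsulates.
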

\begin{proof}
Since $m_e(U_e)=\int_{U_e}(V-e)_-\,dx =\|(V-e)_-\|_{L^1(\O,dx)}<+\infty$, by Lemma 2.12 and [6 Thm 2.1.4], the spectrum of the Dirichlet form  $(D_{U_e},H^1_0(U_e,m_e))$ on $L^2(U_e,m_e)$ is discrete and
\[
{\rm Tr}(e^{-tL_e})\le m_e(U_e)^2\cdot c(t/4)^4\qquad t>0
\]
where $c(t):=\|e^{-tL_e}\|_{L^2\to L^\infty}$. The stated bound follows from previous lemma.
\end{proof}
\begin{theorem}
Suppose $(V-e)_-\in L^1(\O,dx)\cap L^p(\O,dx)$ for some $p>n/2$, $n\ge 3$. Then the following bound holds true
\[
N((\D_{U_e},H^1_0(U_e,m_e)),\lambda)\le {\rm e}^{2d}S_n^d\cdot \|(V-e)_-\|_{L^1(U_e,dx)}^2\cdot \|(V-e)_-\|^{d-2}_{L^p(U_e,dx)}\cdot\lambda^{d}\qquad \lambda\ge 0\, .
\]
In particular, if $U_e=\{V<e\}$ has finite Lebesgue measure, we have
\[
N((\D_{U_e},H^1_0(U_e,m_e)),\lambda)\le {\rm e}^{2d}S_n^d\cdot |U_e|^2\cdot \|(V-e)_-\|^d_{L^p(\O,dx)}\cdot\lambda^{d}\qquad \lambda\ge 0\, .
\]
\end{theorem}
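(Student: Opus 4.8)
The plan is to deduce this eigenvalue estimate from the heat-trace bound of Lemma 5.3 by the standard Tauberian argument, then to optimise in the time parameter and tidy up the constants. Under the stated hypotheses Lemma 5.3 already yields that the spectrum of $L_e$ is discrete; let $0\le\mu_1\le\mu_2\le\cdots$ be its eigenvalues, repeated according to multiplicity, so that $N:=N((\D_{U_e},H^1_0(U_e,m_e)),\lambda)=\#\{k:\mu_k\le\lambda\}$. First I would observe that for every $t>0$
\[
{\rm Tr}(e^{-tL_e})=\sum_{k\ge 1}{\rm e}^{-t\mu_k}\ \ge\ \sum_{k:\,\mu_k\le\lambda}{\rm e}^{-t\mu_k}\ \ge\ N\,{\rm e}^{-t\lambda},
\]
so that $N\le {\rm e}^{t\lambda}\,{\rm Tr}(e^{-tL_e})$, and then insert the bound of Lemma 5.3 to obtain, for all $t>0$,
\[
N\ \le\ \|(V-e)_-\|_{L^1(U_e,dx)}^2\cdot\bigl({\rm e}\,d\,S_r(U_e,m_e)\bigr)^{d}\cdot {\rm e}^{t\lambda}\,t^{-d}.
\]

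Next I would optimise in $t$. For $\lambda>0$ the function $t\mapsto {\rm e}^{t\lambda}t^{-d}$ on $(0,+\infty)$ attains its minimum at $t=d/\lambda$, where it equals ${\rm e}^{d}d^{-d}\lambda^{d}$; for $\lambda=0$ one lets $t\to+\infty$, and since the right-hand side tends to $0$ this forces $N=0$ (consistently, Lemma 5.3 already implies $0\notin\sigma(L_e)$). Taking $t=d/\lambda$ gives
\[
N\ \le\ {\rm e}^{2d}\cdot\|(V-e)_-\|_{L^1(U_e,dx)}^2\cdot S_r(U_e,m_e)^{d}\cdot\lambda^{d}.
\]
It then remains to substitute $S_r(U_e,m_e)=S_n\cdot\|(V-e)_-\|^{2/r}_{L^p(\O,dx)}$ from Lemma 5.1 and to use the identity $\tfrac{2d}{r}=d-2$, which is merely a rewriting of $d=\tfrac{2r}{r-2}$; since $(V-e)_-$ vanishes off $U_e$, its $L^1$- and $L^p$-norms taken over $\O$ or over $U_e$ coincide, and one lands exactly on
\[
N\ \le\ {\rm e}^{2d}S_n^{d}\cdot\|(V-e)_-\|_{L^1(U_e,dx)}^2\cdot\|(V-e)_-\|^{d-2}_{L^p(U_e,dx)}\cdot\lambda^{d},
\]
which is the first assertion. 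For the last assertion, when $|U_e|<+\infty$ I would apply H\"older's inequality $\|(V-e)_-\|_{L^1(U_e,dx)}\le |U_e|^{1-1/p}\,\|(V-e)_-\|_{L^p(U_e,dx)}$ and feed it into the previous bound, so that $\|(V-e)_-\|_{L^1(U_e,dx)}^2\,\|(V-e)_-\|^{d-2}_{L^p(U_e,dx)}$ is controlled by a power of $|U_e|$ times $\|(V-e)_-\|^{d}_{L^p(\O,dx)}$, yielding the stated form.

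I do not expect a genuine obstacle here: all the analytic substance---the weighted Sobolev inequality, the ultracontractivity of $e^{-tL_e}$, and the ensuing nuclearity and trace estimate---has already been secured in Lemmas 5.1--5.3, and what is left is the routine passage from a heat-trace bound to an eigenvalue-counting bound. The only points calling for care are the correct choice of the optimal time $t=d/\lambda$ and the bookkeeping of the constants: one must verify the exponent identity $2d/r=d-2$ so that $\|(V-e)_-\|_{L^p}$ enters to the precise power $d-2$, and keep track of how the powers of ${\rm e}$ and of $d$ produced at the various stages multiply out to the final factor ${\rm e}^{2d}$.
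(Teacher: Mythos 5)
Your proof is correct and takes essentially the same route as the paper: both start from $N\le {\rm e}^{t\lambda}\,{\rm Tr}(e^{-tL_e})$ (the paper phrases this via $\chi_{(-\infty,\lambda]}(x)\le e^{-t(x-\lambda)}$ and the spectral theorem, you via the explicit eigenvalue sum, which is the same estimate), insert the heat-trace bound of Lemma 5.3, optimise at $t=d/\lambda$ to get ${\rm e}^{2d}S_r^d\lambda^d$, and substitute Lemma 5.1's Sobolev constant using $2d/r=d-2$. One small caveat that applies equally to the paper's own proof: H\"older gives $\|(V-e)_-\|_{L^1(U_e,dx)}\le |U_e|^{1-1/p}\|(V-e)_-\|_{L^p(U_e,dx)}$, so the factor actually produced in the second bound is $|U_e|^{2(1-1/p)}$, not $|U_e|^2$; the form $|U_e|^2$ stated in the theorem is a correct majorant only when $|U_e|\ge 1$.
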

\begin{proof}
Since $\chi_{(-\infty,\lambda]}(x)\le e^{-t(x-\lambda)}$ for all $x\in\mathbb{R}$, $\lambda\ge 0$ and $t>0$, we have
\[
\begin{split}
N((\D_{U_e},H^1_0(U_e,m_e)),\lambda) &= {\rm Tr}(\chi_{(-\infty,\lambda]}(L_e)) \le {\rm Tr}(e^{-t(L_e-\lambda)}) \\
&\le \|(V-e)_-\|_{L^1(\O,dx)}^2\cdot \Bigl(edS_r(U_e,m_e)\Bigr)^{d}\cdot e^{t\lambda}\cdot t^{-d}\, .
\end{split}
\]
Choosing $t=d/\lambda$ we obtain
\[
N((D_{U_e},H^1_0(U_e,m_e)),\lambda)\le  \|(V-e)_-\|_{L^1(\O,dx)}^2\cdot \Bigl(e^2S_r(U_e,m_e)\Bigr)^{d}\cdot \lambda^{d}\qquad \lambda\ge 0\, .
\]
By Lemma 5.1 and, in particular, from the evaluation
\[
S_r(U_e,m_e)\le S_{n^*}(U_e,dx)\cdot \|(V-e)_-\|^{2/r}_{L^p(\O,dx)}\, ,
\]
we have
\[
N((D_{U_e},H^1_0(U_e,m_e)),\lambda)\le  \|(V-e)_-\|_{L^1(\O,dx)}^2\cdot \Bigl({\rm e}^2S_n\cdot \|(V-e)_-\|^{2/r}_{L^p(\O,dx)}\Bigr)^{d}\cdot \lambda^{d}\qquad \lambda\ge 0
\]
which provides the first stated bound since $r=\frac{2d}{d-2}$ implies $\frac{2d}{r}=d-2$. The second one follows from H\"older inequality.
\end{proof}

\subsection{}
To bound above $N((\E_0,\F),\lambda\|A_\lambda\|)$ we prove Sobolev inequalities for the absorption-to-reflection Dirichlet form $(\E_0,\F)$ on $L^2(\partial U_e,\mu_e)$,
\vskip0.2truecm
{\centerline {\it assuming that $U_e\subset\R^n$, $n\ge 3$, is bounded and its boundary $\partial U_e$ is smooth.}
}
\vskip0.2truecm\noindent
On $\partial U_e$ let $\sigma$ be the $(n-1)$-dimensional Hausdorff measure and consider also the measure
\[
\nu_e:=\int_{U_e}dx\, \mu_x\, .
\]
The harmonic measures and the measure $\nu_e$ depend upon the potential $V$ only through the open set $U_e:=\{V<e\}$.
\begin{lemma}
The following boundary Sobolev inequality
\begin{equation}
\|\varphi\|^2_{L^q(\partial U_e,\sigma)}\le S\cdot\E_0[\varphi]+b\cdot \|\varphi\|^2_{L^2(\partial U_e,\nu_e)}\qquad \varphi\in\F
\end{equation}
holds true for some $b\in\R$, $q:=\frac{2(n-1)}{n-2}$ and $S:=\frac{2}{n-2}\cdot \o_n^{\frac{1}{1-n}}$, where $\o_n$ is the volume of the unit sphere in $\R^n$.
\end{lemma}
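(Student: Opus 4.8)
The plan is to derive the inequality from the classical \emph{sharp} Sobolev trace inequality on the bounded smooth domain $U_e\subset\R^n$, applied to the harmonic extension $L_0\varphi$ of the boundary datum, and then to identify the three measures $\sigma$, $\nu_e$ and $\mu_e$ in their correct roles. Concretely, since $U_e$ is bounded with smooth boundary one has the trace Sobolev inequality with an additive lower-order term: there is a constant $b=b(U_e)\in\R$ such that
\[
\|{\rm Tr}(u)\|^2_{L^q(\partial U_e,\sigma)}\le S\cdot\D_{U_e}[u]+b\cdot\|u\|^2_{L^2(U_e,dx)}\qquad u\in H^1(U_e,dx),
\]
with $q=\frac{2(n-1)}{n-2}$ and $S=\frac{2}{n-2}\,\o_n^{1/(1-n)}$ the optimal constant of the trace Sobolev inequality on the half-space $\R^n_+$. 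The delicate point, which I regard as the main obstacle, is that although on a bounded domain the term $b\|u\|^2_{L^2(U_e,dx)}$ cannot be removed, the coefficient of the Dirichlet integral can be taken to be exactly the flat optimal constant $S$: the best gradient constant in a trace inequality is a purely local invariant of the boundary, read off from the flat model, and here one must lean on the known sharp-constant results.

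Next I would test this inequality on $u:=L_0\varphi$, first for $\varphi$ in the algebra $\B=\F\cap C_0(\partial U_e)$, working under the standing hypotheses of this part of the paper (regularity of $(\D_{U_e},H^1(U_e,m_e))$ and $0\notin\sigma(\D_{U_e},H^1_0(U_e,m_e))$). By Lemma 3.6, $L_0\varphi\in\HO\cap C_b(U_e)$, hence $L_0\varphi\in L^\infty(U_e)\subset L^2(U_e,dx)$ since $U_e$ is bounded, while $\nabla(L_0\varphi)\in L^2(U_e,dx)$ because $L_0\varphi\in{\rm BL}(U_e)$; thus $L_0\varphi\in H^1(U_e,dx)$ and the inequality above applies to it. Now ${\rm Tr}(L_0\varphi)=\varphi$; the definition of $\E_0$ in Lemma 3.5 (case $\lambda=0$) gives $\D_{U_e}[L_0\varphi]=\E_0[\varphi]$; and, since each harmonic measure $\mu_x$ is a probability measure, Jensen's inequality together with $\nu_e=\int_{U_e}dx\,\mu_x$ gives, exactly as in the proof of Lemma 3.7,
\[
\|L_0\varphi\|^2_{L^2(U_e,dx)}=\int_{U_e}\Bigl|\int_{\partial U_e}\varphi\,d\mu_x\Bigr|^2dx\le\int_{U_e}dx\int_{\partial U_e}|\varphi|^2\,d\mu_x=\|\varphi\|^2_{L^2(\partial U_e,\nu_e)}.
\]
Inserting these three facts into the trace inequality yields the asserted bound for every $\varphi\in\B$.

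It then remains to extend the bound from $\B$ to all of $\F$, which I would do by a routine closure argument. Since $\B$ is a form core for the regular Dirichlet form $(\E_0,\F)$ on $L^2(\partial U_e,\mu_e)$ (Lemmas 3.6 and 3.8), any $\varphi\in\F$ with finite right-hand side is an $\E_0$-form limit of functions $\varphi_k\in\B$; because $\partial U_e$ is smooth the harmonic measures, and hence the measures $\mu_e$ and $\nu_e$, are mutually absolutely continuous with $\sigma$, so after passing to a subsequence one may take $\varphi_k\to\varphi$ $\sigma$-a.e.\ on $\partial U_e$ while $\E_0[\varphi_k]\to\E_0[\varphi]$ and $\|\varphi_k\|^2_{L^2(\partial U_e,\nu_e)}\to\|\varphi\|^2_{L^2(\partial U_e,\nu_e)}$. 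Fatou's lemma on the left-hand side then promotes the inequality to $\varphi$; if its right-hand side is infinite there is nothing to prove. This last step is standard, so I would only sketch it.
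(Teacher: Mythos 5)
Your proof is correct and takes essentially the same route as the paper: the paper simply cites the Li--Zhu sharp Sobolev trace theorem, applies it to $u=L_0\varphi$, uses ${\rm Tr}(L_0\varphi)=\varphi$ and $\D_{U_e}[L_0\varphi]=\E_0[\varphi]$, and bounds $\|L_0\varphi\|^2_{L^2(U_e,dx)}\le\|\varphi\|^2_{L^2(\partial U_e,\nu_e)}$ by the same Jensen/H\"older step with the harmonic measures. Your extra care about why $L_0\varphi\in H^1(U_e,dx)$ and the closure from $\B$ to $\F$ is sound but is left implicit in the paper.
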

\begin{proof}
Let us consider the following Sobolev trace inequality [12 Thm 0.1]
\[
\|{\rm Tr}(u)\|^2_{L^q(\partial U_e,\sigma)}\le S\cdot\D_{U_e}[u]+b\cdot \|u\|^2_{L^2(U_e,dx)}\qquad u\in H^1(U_e,dx)\, .
\]
Choosing $u=L_0\varphi$ we have ${\rm Tr}(L_0\varphi)=\varphi$ and, since by definition $\D_{U_e}[L_0\varphi]=\E_0[\varphi]$, we get
\[
\|\varphi\|^2_{L^q(\partial U_e,\sigma)}\le S\cdot\E_0[\varphi]+b\cdot \|L_0\varphi\|^2_{L^2(U_e,dx)}\qquad \varphi\in\F\, .
\]
Since moreover
\[
\begin{split}
\|L_0\varphi\|^2_{L^2(U_e,dx)}&=\int_{U_e}dx\Bigl|\int_{\partial U_e}\varphi(y)\cdot\mu_x(dy)\Bigr|^2 \le
\int_{U_e}dx \int_{\partial U_e}|\varphi(y)|^2\cdot\mu_x(dy) \\
&= \|\varphi\|^2_{L^2(\partial U_e,\nu_e)}\, ,
\end{split}
\]
we obtain the stated inequalities.
\end{proof}
\begin{lemma}
The Radon-Nikodym derivative of the measure $\nu_e$ with respect to the Hausdorff measure $\sigma$ is a continuous, nowhere vanishing function on $\partial U_e$ and there exists a constant $c_P>0$ such that
\[
\Bigl\|\frac{d\nu_e}{d\sigma}\Bigr\|_{C(\partial U_e)}     \le c_P\cdot \o_{n-1}\cdot {\rm diam}(U_e)\, .
\]
\end{lemma}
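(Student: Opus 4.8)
The plan is to identify the Radon--Nikodym density $d\nu_e/d\sigma$ explicitly as the negative boundary normal derivative of the torsion function of $U_e$, to read off its continuity and strict positivity from this, and then to bound the normal derivative from above by a classical Poisson kernel estimate; the smoothness hypothesis on $\partial U_e$ enters precisely at this last point.

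Since $\partial U_e$ is smooth and bounded, for every $x\in U_e$ the harmonic measure $\mu_x$ introduced above is the classical one, with $d\mu_x=K(x,\cdot)\,d\sigma$, where $K(x,\cdot)=-\partial_\nu G(x,\cdot)$ is the Poisson kernel, $G$ the Green function of $-\Delta$ on $U_e$ with Dirichlet conditions and $\nu$ the outward unit normal. Let $w$ denote the torsion function of $U_e$, i.e.\ the solution of $-\Delta w=1$ in $U_e$ with $w|_{\partial U_e}=0$, equivalently $w(x)=\int_{U_e}G(x,z)\,dz$; by elliptic boundary regularity $w\in C^\infty(\overline{U_e})$. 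From $d\mu_x=K(x,\cdot)\,d\sigma$ and Tonelli's theorem, $\nu_e=\int_{U_e}dx\,\mu_x$ is absolutely continuous with respect to $\sigma$, with
\[
\frac{d\nu_e}{d\sigma}(y)=\int_{U_e}K(x,y)\,dx=-\partial_\nu w(y),\qquad y\in\partial U_e,
\]
the second equality because integrating $K(\cdot,y)=-\partial_\nu G(\cdot,y)$ over $U_e$ in the first variable gives $-\partial_\nu w(y)$. (One may also obtain $\nu_e=-\partial_\nu w\cdot\sigma$ directly from Green's second identity, testing the equality $\nu_e(A)=\int_{U_e}u_A\,dx$ against $-\Delta w=1$, with $u_A$ the harmonic extension of an indicator $\mathbf 1_A$.)

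Continuity and non-vanishing now follow. Since $w\in C^\infty(\overline{U_e})$, the function $-\partial_\nu w$ is continuous on $\partial U_e$. It is strictly positive: by the strong minimum principle $w>0$ in $U_e$ (it is superharmonic and vanishes on $\partial U_e$), and at any $y\in\partial U_e$ smoothness supplies an interior tangent ball $B(p,\rho)\subset U_e$; on it the function $w-\tfrac{1}{2n}\bigl(\rho^2-|\cdot-p|^2\bigr)$ is harmonic and nonnegative on $\partial B(p,\rho)$, hence nonnegative on $B(p,\rho)$, which yields the Hopf-type lower bound $-\partial_\nu w(y)\ge\rho/n>0$, with $\rho$ uniform in $y$ by compactness.

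For the upper bound I would invoke the classical estimate for the Poisson kernel of a bounded smooth domain: there is a constant $c>0$ depending only on $U_e$ such that $K(x,y)\le c\,\mathrm{dist}(x,\partial U_e)\,|x-y|^{-n}\le c\,|x-y|^{1-n}$ for $x\in U_e$ and $y\in\partial U_e$, the last inequality using $\mathrm{dist}(x,\partial U_e)\le|x-y|$. Fixing $y\in\partial U_e$ and using $U_e\subseteq B(y,{\rm diam}(U_e))$,
\[
\frac{d\nu_e}{d\sigma}(y)=\int_{U_e}K(x,y)\,dx\le c\int_{B(y,{\rm diam}(U_e))}|x-y|^{1-n}\,dx=c\,|S^{n-1}|\,{\rm diam}(U_e),
\]
the last integral evaluated in polar coordinates. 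Setting $c_P:=c\,|S^{n-1}|/\o_{n-1}$ gives $\|d\nu_e/d\sigma\|_{C(\partial U_e)}\le c_P\cdot\o_{n-1}\cdot{\rm diam}(U_e)$. The single nontrivial ingredient here is this Poisson kernel estimate: a pointwise bound on $w$ does not by itself control its normal derivative at the boundary, so the smoothness hypothesis — equivalently, a uniform $C^1(\overline{U_e})$ bound on $w$ — and the domain-dependent constant $c_P$ are genuinely needed, whereas the remaining steps (Tonelli, Green's identity, the minimum principle and Hopf's lemma, and the elementary radial integral) are routine.
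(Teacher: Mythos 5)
Your proof is correct but takes a genuinely different route for the continuity and positivity parts. Both you and the paper start from the same identity $d\nu_e/d\sigma(y)=\int_{U_e}h(x,y)\,dx$ for the Poisson kernel $h$, and both obtain the upper bound by the same route: the Krantz-type estimate $h(x,y)\le c_P\,d(x,\partial U_e)\,|x-y|^{-n}\le c_P\,|x-y|^{1-n}$ integrated in polar coordinates over $B(y,\mathrm{diam}\,U_e)$, which is exactly where the constant $c_P$ enters. Where you diverge is in continuity and non-vanishing. The paper argues measure-theoretically: it shows, again from Krantz's bound, that $\{h(\cdot,y):y\in\partial U_e\}$ is bounded in $L^\gamma(U_e,dx)$ for some $\gamma\in(1,n/(n-1))$, hence uniformly integrable (de la Vall\'ee Poussin), and applies Vitali's convergence theorem together with the continuity of $h(x,\cdot)$ for fixed $x$; positivity is deduced simply from $h(x,y)>0$ via the maximum principle applied to $x\mapsto h(x,y)$. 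You instead observe $\int_{U_e}h(x,y)\,dx=-\partial_\nu w(y)$ with $w$ the torsion function of $U_e$, read off continuity from global elliptic regularity ($w\in C^\infty(\overline{U_e})$), and obtain the quantitative lower bound $-\partial_\nu w\ge\rho/n>0$ from a Hopf barrier on a uniform interior tangent ball. Your identification via the torsion function is structurally cleaner and gives strictly more (a uniform positive lower bound on the density, not merely pointwise positivity), at the cost of invoking boundary elliptic regularity and the uniform interior ball property; the paper's route leans only on pointwise Poisson-kernel bounds and elementary integrability.
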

\begin{proof}
Notice that the harmonic measures $\mu_x$ on $\partial U_e$ are absolutely continuous with respect to the Hausdorff measure $\sigma$ and that their Radon-Nikodym derivatives are represented by the Poisson kernel $h:U_e\times \partial U_e\to [0,+\infty)$: $\mu_x=h(x,\cdot)\sigma$. Hence, also the measure $\nu_e$ is absolutely continuous with respect to $\sigma$, with Radon-Nikodym derivative given by
\[
\nu_e (dy)=\Bigl(\int_{U_e}dx\, h(x,y)\Bigr)\cdot\sigma(dy)
\]
Notice also that, since the function $U_e\ni x\mapsto h(x,y)$ is harmonic and nonnegative, it cannot vanish otherwise would be identically zero, by the Maximum Principle. In particular, $h(x,y)>0$ for all $(x,y)\in U_e\times \partial U_e$ and then $\int_{U_e}dx\, h(x,y)>0$ for all $y\in\partial U_e$.
\par\noindent
By a result due to Krantz [10], there exists a constant $c_P>0$ such that the Poisson kernel $h_U:U\times\partial U\to [0,+\infty)$ of a bounded, smooth domain $U\subset\R^n$ is bounded by
\[
\frac{1}{c_P}\cdot \frac{d(x,\partial U)}{|x-y|^n}\le h_U(x,y)\le c_P\cdot \frac{d(x,\partial U)}{|x-y|^n}\qquad (x,y)\in U\times \partial U\, .
\]
Applying the result to $U_e$ and since $d(x,\partial U_e)\le |x-y|$ for all $y\in\partial U_e$, we have
\[
\begin{split}
\int_{U_e}dx\, h(x,y)&\le c_P\cdot\int_{U_e}dx\, |x-y|^{1-n}\le c_P\cdot \int_{B(y,{\rm diam}(U_e))}dx\, |x-y|^{1-n} \\
& \le c_P\cdot  \int_{B(0,{\rm diam}(U_e))}dz\, |z|^{1-n} =c_P\cdot \o_{n-1}\cdot \int_0^{{\rm diam}(U_e)}dr\, r^{n-1}\cdot r^{1-n} \\
&=c_P\cdot \o_{n-1}\cdot {\rm diam}(U_e)\, .
\end{split}
\]
By a similar calculation, for $\gamma\in (1,\frac{n}{n-1})$ we have
\[
\begin{split}
\int_{U_e}dx\, h(x,y)^\gamma &\le c_P\cdot \o_{n-1}\cdot \int_0^{{\rm diam}(U_e)}dr\, r^{n-1}\cdot r^{\gamma(1-n)} \\
&=c_P\cdot \frac{\o_{n-1}\cdot {\rm diam}(U_e)^{n-\gamma(n-1)}}{n-\gamma(n-1)} <+\infty
\end{split}
\]
so that the family $\{h(\cdot,y)\in L^1(U_e, dx):y\in \partial U_e\}$ is bounded in $L^\gamma(U_e,dx)$. Consequently, by the de la Vall\'ee Poussin test, it is uniformly bounded in $L^1(U_e,dx)$. Since, moreover, for any fixed $x\in U_e$, the function $h(x,\cdot)$ is continuous on $\partial U_e$, applying the Vitali convergence Theorem we have
\[
\lim_{z\to y}\int_{U_e}dx\, h(x,z)=\int_{U_e}dx\, h(x,y)\qquad y\in\partial U_e\, .
\]
The Radon-Nikodym derivative $\frac{d\nu_e}{d\mu_e}$ is then a continuous function on the boundary $\partial U_e$.
\end{proof}
It is not clear how to bound above $c_P$ geometrically.
\begin{lemma}
Suppose $(V-e)_-\in L^p(\O,dx)$ for some $p>n/2$, $n\ge 3$. The Radon-Nikodym derivative of $\mu_e$ with respect to $\sigma$ is in $L^p(\partial U_e,\sigma)$ and
\[
\Bigl\|\frac{d\mu_e}{d\sigma}\Bigr\|_{L^p(\partial U_e,\sigma)}\le\Bigl\|\frac{d\nu_e}{d\sigma}\Bigr\|_{C(\partial U_e)}^{1-\frac{1}{p}}\cdot\|(V-e)_-\|_{L^p(U_e,dx)}\, .
\]
\end{lemma}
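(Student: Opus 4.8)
The plan is to identify both Radon--Nikodym derivatives explicitly through the Poisson kernel $h$ of $U_e$, and then deduce the bound by a single application of Hölder's inequality on $\partial U_e$ with the weight $h(x,y)\,dx$, followed by Fubini--Tonelli and the normalization $\mu_x(\partial U_e)=1$.

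First I would record, exactly as in the proof of the previous lemma, that the harmonic measures satisfy $\mu_x=h(x,\cdot)\,\sigma$, so that $\nu_e$ and $\mu_e$ are both absolutely continuous with respect to $\sigma$ with
\[
\frac{d\nu_e}{d\sigma}(y)=\int_{U_e}h(x,y)\,dx,\qquad \frac{d\mu_e}{d\sigma}(y)=\int_{U_e}(V-e)_-(x)\,h(x,y)\,dx,
\]
the second identity by Fubini applied to the superposition $\mu_e=\int_{U_e}m_e(dx)\,\mu_x$ with $m_e=(V-e)_-\,dx$. Writing $p':=p/(p-1)$ and factoring $h(x,y)=h(x,y)^{1/p}h(x,y)^{1/p'}$, Hölder's inequality with exponents $p,p'$ gives, for $\sigma$-a.e. $y\in\partial U_e$,
\[
\frac{d\mu_e}{d\sigma}(y)\le\Bigl(\int_{U_e}(V-e)_-(x)^p\,h(x,y)\,dx\Bigr)^{1/p}\Bigl(\int_{U_e}h(x,y)\,dx\Bigr)^{1/p'}\le\Bigl\|\frac{d\nu_e}{d\sigma}\Bigr\|_{C(\partial U_e)}^{1/p'}\Bigl(\int_{U_e}(V-e)_-(x)^p\,h(x,y)\,dx\Bigr)^{1/p},
\]
where the second estimate uses the uniform bound on $\int_{U_e}h(\cdot,y)\,dx=\tfrac{d\nu_e}{d\sigma}(y)$ established in the previous lemma.

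Next I would raise this pointwise inequality to the $p$-th power, integrate over $\partial U_e$ against $\sigma$, and interchange the order of integration by Tonelli (all integrands being nonnegative):
\[
\int_{\partial U_e}\Bigl(\frac{d\mu_e}{d\sigma}(y)\Bigr)^p\sigma(dy)\le\Bigl\|\frac{d\nu_e}{d\sigma}\Bigr\|_{C(\partial U_e)}^{p/p'}\int_{U_e}(V-e)_-(x)^p\Bigl(\int_{\partial U_e}h(x,y)\,\sigma(dy)\Bigr)dx.
\]
Since each $\mu_x$ is a probability measure, $\int_{\partial U_e}h(x,y)\,\sigma(dy)=\mu_x(\partial U_e)=1$, so the right-hand side collapses to $\|d\nu_e/d\sigma\|_{C(\partial U_e)}^{p/p'}\cdot\|(V-e)_-\|_{L^p(U_e,dx)}^p$, which is finite by hypothesis; in particular $d\mu_e/d\sigma\in L^p(\partial U_e,\sigma)$. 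Extracting $p$-th roots and using $1/p'=1-1/p$ yields exactly the asserted inequality.

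There is essentially no obstacle: the estimate is a one-step Hölder inequality. The only points that require a little care are justifying the two Fubini interchanges (legitimate since every integrand is nonnegative) and noticing that it is precisely the normalization $\mu_x(\partial U_e)=1$ of the harmonic measures that makes the double integral reduce to $\|(V-e)_-\|_{L^p(U_e,dx)}^p$ with no extraneous geometric factor.
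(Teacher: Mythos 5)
Your proposal is correct and is essentially the paper's own proof: the author also applies Hölder's inequality to the integral $\int_{U_e} h(x,y)(V-e)_-(x)\,dx$ (phrased as multiplying and dividing by $\int_{U_e}h(x,y)\,dx$ and invoking Jensen for the normalized kernel, which is algebraically identical to your $h=h^{1/p}h^{1/p'}$ split), then bounds $\int_{U_e}h(\cdot,y)\,dx$ by its supremum, interchanges the order of integration, and uses $\int_{\partial U_e}h(x,y)\,\sigma(dy)=\mu_x(\partial U_e)=1$. The only cosmetic difference is the presentation of the Hölder step; the mathematical content and the ordering of the estimates coincide.
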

\begin{proof}
Setting $W:=(V-e)_-$, H\"older inequality we have
\[
\begin{split}
\int_{\partial U_e}\sigma(dy)\Bigl|\frac{d\mu_e}{d\sigma}(y)\Bigr|^p &=\int_{\partial U_e}\sigma(dy)\Bigl|\int_{U_e}dx\,  h(x,y)W(x)\Bigr|^p \\
&=\int_{\partial U_e}\sigma(dy)\Bigl|\Bigl(\int_{U_e}dx\, h(x,y)\Bigr)\cdot \Bigl(\int_{U_e}dx\, h(x,y)\Bigr)^{-1}\cdot\int_{U_e}dx\,  h(x,y)W(x)\Bigr|^p \\
&=\int_{\partial U_e}\sigma(dy)\Bigl(\int_{U_e}dx\, h(x,y)\Bigr)^p\Bigl| \Bigl(\int_{U_e}dx\, h(x,y)\Bigr)^{-1}\cdot\int_{U_e}dx\,  h(x,y)W(x)\Bigr|^p \\
&\le\int_{\partial U_e}\sigma(dy)\Bigl(\int_{U_e}dx\, h(x,y)\Bigr)^{p-1}\cdot\int_{U_e}dx\,  h(x,y)W(x)^p \\
&\le \bigl(c\cdot \o_{n-1}\cdot {\rm diam}(U_e)\bigr)^{p-1}\cdot \int_{U_e}dx\,W(x)^p \int_{\partial U_e}\sigma(dy) h(x,y)\\
&= \Bigl\|\frac{d\nu_e}{d\sigma}\Bigr\|_{C(\partial U_e)}^{p-1}\cdot \int_{U_e}dx\,W(x)^p L_0(1)(x)\\
&=\Bigl\|\frac{d\nu_e}{d\sigma}\Bigr\|_{C(\partial U_e)}^{p-1}\cdot \int_{U_e}dx\,W(x)^p \\
&= \Bigl\|\frac{d\nu_e}{d\sigma}\Bigr\|_{C(\partial U_e)}^{p-1}\cdot \|W\|^p_{L^p(U_e,dx)}\, . \\
\end{split}
\]
\end{proof}
\begin{lemma}
The Radon-Nikodym derivative of $\nu_e$ with respect to $\mu_e$ is essentially uniformly bounded with respect to the Hausdorff measure $\sigma$
\end{lemma}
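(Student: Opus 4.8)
The plan is to make the two densities explicit and then bound their quotient. Recall from the proof of Lemma 5.6 that each harmonic measure is $\mu_x=h(x,\cdot)\,\sigma$, where $h:U_e\times\partial U_e\to(0,+\infty)$ is the Poisson kernel; consequently
\[
\nu_e=\Bigl(\int_{U_e}h(x,\cdot)\,dx\Bigr)\sigma,\qquad
\mu_e=\Bigl(\int_{U_e}(V-e)_-(x)\,h(x,\cdot)\,dx\Bigr)\sigma .
\]
Both of these $\sigma$-densities are strictly positive for $\sigma$-a.e. $y\in\partial U_e$ — the first by Lemma 5.6, the second because $(V-e)_->0$ and $h>0$ on $U_e$ — so $\nu_e$ and $\mu_e$ are mutually absolutely continuous and, for $\sigma$-a.e. $y$,
\[
\frac{d\nu_e}{d\mu_e}(y)=\frac{\int_{U_e}h(x,y)\,dx}{\int_{U_e}(V-e)_-(x)\,h(x,y)\,dx}.
\]
It then suffices to bound the numerator above and the denominator below by positive constants independent of $y\in\partial U_e$.

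The numerator is handled by Lemma 5.6 itself: $\int_{U_e}h(x,y)\,dx=\frac{d\nu_e}{d\sigma}(y)\le c_P\,\o_{n-1}\,{\rm diam}(U_e)=:c_1<+\infty$ for every $y$, using that $U_e$ is bounded.

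For the denominator, write $W:=(V-e)_-$; the idea is to restrict the integral to a portion of $U_e$ that stays a fixed distance away from $\partial U_e$, where both $W$ and $h(\cdot,y)$ admit uniform lower bounds. Since $V$ is upper semicontinuous, $W$ is lower semicontinuous and everywhere positive on $U_e$, so on any fixed closed ball $B\subset U_e$ it attains a positive minimum $c_B>0$, while $d(x,\partial U_e)\ge d(B,\partial U_e)>0$ there. (Alternatively, without semicontinuity, the sets $\{x\in U_e:W(x)>1/k,\ d(x,\partial U_e)>1/k\}$ increase to $U_e$ up to a Lebesgue-null set, hence one of them has positive measure and may replace $B$.) The two-sided Krantz estimate recalled in the proof of Lemma 5.6, combined with $|x-y|\le{\rm diam}(U_e)$, gives $h(x,y)\ge c_P^{-1}\,d(B,\partial U_e)\,{\rm diam}(U_e)^{-n}$ for all $x\in B$ and $y\in\partial U_e$, whence
\[
\int_{U_e}W(x)\,h(x,y)\,dx\ \ge\ \int_{B}W(x)\,h(x,y)\,dx\ \ge\ \frac{c_B\,d(B,\partial U_e)\,|B|}{c_P\,{\rm diam}(U_e)^{n}}\ =:\ c_2>0
\]
uniformly in $y\in\partial U_e$. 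Combining the two estimates gives $\frac{d\nu_e}{d\mu_e}\le c_1/c_2$ for $\sigma$-a.e. $y$, i.e. $\frac{d\nu_e}{d\mu_e}\in L^\infty(\partial U_e,\sigma)$.

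The only real obstacle is the uniform-in-$y$ lower bound on $\int_{U_e}W\,h(\cdot,y)\,dx$: since $W$ may degenerate toward $\partial U_e$ exactly where the Poisson kernel $h(\cdot,y)$ concentrates, no pointwise comparison is available; the resolution is precisely to localize to a compact subset of $U_e$ bounded away from the boundary, where the lower Krantz bound converts strict positivity of $W$ and of the Poisson kernel into quantitative estimates.
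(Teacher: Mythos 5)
Your proof is correct, and it takes a genuinely different route from the paper. The paper introduces the auxiliary function $g(y)=\frac{\int_{U_e}h(x,y)(W\wedge1)(x)\,dx}{\int_{U_e}h(x,y)\,dx}$, proves (via another de la Vall\'ee Poussin/Vitali argument, as in Lemma 5.6) that $g$ is continuous on the compact set $\partial U_e$, hence attains a minimum $g(y_0)$, and then shows $g(y_0)>0$ by a contradiction argument (a mean value of the nonnegative function $W\wedge1$ can vanish only if $W=0$ a.e.). Your argument is more direct and quantitative: you bound the numerator $\int_{U_e}h(x,y)\,dx$ from above by the constant already obtained in Lemma 5.6, and bound the denominator $\int_{U_e}W\,h(\cdot,y)\,dx$ from below uniformly in $y$ by restricting the integral to a fixed closed ball $B\subset U_e$ bounded away from $\partial U_e$, where $W\ge c_B>0$ (lower semicontinuity of $W=e-V$ on a compactum) and where the lower half of the Krantz two-sided estimate yields $h(x,y)\ge c_P^{-1}d(B,\partial U_e)\,{\rm diam}(U_e)^{-n}$ uniformly. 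Your approach buys an explicit bound $c_1/c_2$ on $\|d\nu_e/d\mu_e\|_\infty$ at the cost of invoking the Krantz lower bound, which the paper uses only implicitly (through the Maximum Principle positivity of the numerator density); the paper's version buys a softer, less quantitative argument that avoids localization but requires the additional continuity step for the modified kernel $\int_{U_e}h(x,\cdot)(W\wedge1)\,dx$.
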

\begin{proof}
Setting $W:=(V-e)_-$ we have
\[
\frac{d\mu_e}{d\nu_e}(y)=\frac{\int_{U_e} dx\, h(x,y)W(x)}{\int_{U_e} dx\, h(x,y)}\qquad y\in\partial U_e\, .
\]
Consider now the function $g:\partial U_e\to \R$ defined by
\[
g(y):=\frac{\int_{U_e} dx\, h(x,y)(W\wedge 1)(x)}{\int_{U_e} dx\, h(x,y)}\qquad y\in\partial U_e
\]
so that $\frac{d\mu_e}{d\nu_e}(y)\ge g(y)$ for all $y\in\partial U_e$.\\
By a previous lemma, the function $\partial U_e\ni y\mapsto \int_{U_e}dx\,h(x,y)$ is continuous. Since the function $W\wedge 1$ is bounded, one may prove by the same method, that the function $\partial U_e\ni y\mapsto \int_{U_e}dx\,h(x,y)(W\wedge 1)(x)$ is continuous too. Thus $g$ is a nonnegative, continuous function on the compact set $\partial U_e$ attaining its minimum value $g(y_0)$ at some point $y_0\in\partial U_e$. Since, however, the value $g(y_0)$ is, by definition, the mean value of the nonnegative function $W\wedge 1$ with respect to the finite measure $h(x,y_0)dx$ on $U_e$, it cannot vanish unless $W=(V-e)_-=0$ $dx$-almost everywhere $x\in U_e$. This is a contradiction since our running hypothesis is that $U_e:=\{V<e\}$ is a nonempty,  open set. This shows that
\[
\frac{d\nu_e}{d\mu_e}(y)\le \frac{1}{g(y_0)}<+\infty\qquad \sigma{-\rm a.e.}\,\,\ y\in\partial U_e\, .
\]
\end{proof}
\begin{lemma}
Suppose $(V-e)_-\in L^1(\O,dx)\cap L^p(\O,dx)$ for some $p>n/2$, $n\ge 3$
and set $s\in [1,q)$ such that $\frac{1}{p}+\frac{s}{q}=1$.
Then the following Sobolev inequality holds true
\begin{equation}
\|\varphi\|^2_{L^s(\partial U_e,\mu_e)}\le c_1\cdot\E_0[\varphi]+c_2\cdot\|\varphi\|^2_{L^2(\partial U_e,\mu_e)}\qquad \varphi\in\F
\end{equation}
where $c_1:=S\cdot \|\frac{d\mu_e}{d\sigma}\|^{2/s}_{L^p(\partial U_e,\sigma)}$ and $c_2:=b\cdot \|\frac{d\mu_e}{d\sigma}\|^{2/s}_{L^p(\partial U_e,\sigma)}\cdot \|\frac{d\nu_e}{d\mu_e}\|_{C(\partial U_e)}$.
\end{lemma}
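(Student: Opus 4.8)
The plan is to transport the boundary Sobolev inequality of Lemma~5.5, which produces an $L^q$ estimate with respect to the Hausdorff measure $\sigma$, forward to the measure $\mu_e$, by absorbing on each side one of the two Radon--Nikodym densities already controlled above: the $L^p(\partial U_e,\sigma)$-bound on $\frac{d\mu_e}{d\sigma}$ from Lemma~5.7 on the principal term, and the essential boundedness of $\frac{d\nu_e}{d\mu_e}$ from Lemma~5.8 on the lower-order term. The exponent $s$ is chosen precisely so that the H\"older pairing lands the power of $\varphi$ on the Sobolev exponent $q$.

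First I would check that the recipe $\frac1p+\frac sq=1$ genuinely determines an exponent $s\in[1,q)$. It gives $s=q\bigl(1-\tfrac1p\bigr)$, hence $s<q$; and $s\ge1$ is equivalent to $\tfrac1p\le 1-\tfrac1q=\tfrac{n}{2(n-1)}$, i.e.\ to $p\ge 2-\tfrac2n$, which holds because $p>\tfrac n2$ and $\tfrac n2\ge 2-\tfrac2n$ is the same as $(n-2)^2\ge0$ (in fact $s>1$ strictly).

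Next I would estimate the left-hand side by H\"older's inequality on $(\partial U_e,\sigma)$ with the conjugate pair $p$ and $p'=\bigl(1-\tfrac1p\bigr)^{-1}$:
\[
\int_{\partial U_e}|\varphi|^s\,d\mu_e=\int_{\partial U_e}|\varphi|^s\,\frac{d\mu_e}{d\sigma}\,d\sigma\le\Bigl\|\frac{d\mu_e}{d\sigma}\Bigr\|_{L^p(\partial U_e,\sigma)}\Bigl(\int_{\partial U_e}|\varphi|^{sp'}\,d\sigma\Bigr)^{1/p'}.
\]
The choice of $s$ makes $sp'=q$ and $q/p'=s$, so the last factor equals $\|\varphi\|^{s}_{L^q(\partial U_e,\sigma)}$; raising the resulting inequality to the power $2/s$ gives
\[
\|\varphi\|^2_{L^s(\partial U_e,\mu_e)}\le\Bigl\|\frac{d\mu_e}{d\sigma}\Bigr\|^{2/s}_{L^p(\partial U_e,\sigma)}\cdot\|\varphi\|^2_{L^q(\partial U_e,\sigma)}.
\]

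Finally I would insert the boundary Sobolev inequality of Lemma~5.5, namely $\|\varphi\|^2_{L^q(\partial U_e,\sigma)}\le S\,\E_0[\varphi]+b\,\|\varphi\|^2_{L^2(\partial U_e,\nu_e)}$, and dominate the remaining term via Lemma~5.8: since $d\nu_e\le\bigl\|\frac{d\nu_e}{d\mu_e}\bigr\|_{C(\partial U_e)}\,d\mu_e$ one has $\|\varphi\|^2_{L^2(\partial U_e,\nu_e)}\le\bigl\|\frac{d\nu_e}{d\mu_e}\bigr\|_{C(\partial U_e)}\,\|\varphi\|^2_{L^2(\partial U_e,\mu_e)}$. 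Collecting the constants reproduces exactly $c_1=S\bigl\|\frac{d\mu_e}{d\sigma}\bigr\|^{2/s}_{L^p(\partial U_e,\sigma)}$ and $c_2=b\,\bigl\|\frac{d\mu_e}{d\sigma}\bigr\|^{2/s}_{L^p(\partial U_e,\sigma)}\bigl\|\frac{d\nu_e}{d\mu_e}\bigr\|_{C(\partial U_e)}$. I do not anticipate a genuine obstacle here: every analytic ingredient is already available, so the only points requiring care are the verification $s\in[1,q)$ above and the bookkeeping of H\"older exponents that forces $sp'=q$; one should also keep in mind that finiteness of the two densities (and of $\mu_e$ itself) rests on the running hypotheses $(V-e)_-\in L^1(\O,dx)\cap L^p(\O,dx)$ and smoothness of $\partial U_e$.
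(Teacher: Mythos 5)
Your argument is correct and follows essentially the same route as the paper: rewrite $\int_{\partial U_e}|\varphi|^s\,d\mu_e$ as $\int_{\partial U_e}|\varphi|^s\,\tfrac{d\mu_e}{d\sigma}\,d\sigma$, apply H\"older with exponents $p$ and $p'$ (so that $sp'=q$), insert the boundary Sobolev inequality of Lemma 5.5, and absorb the $\nu_e$-norm via the $C(\partial U_e)$-bound on $\tfrac{d\nu_e}{d\mu_e}$ from Lemma 5.8. Your explicit verification that $s\in[1,q)$ (indeed $s>1$) under $p>n/2$, $n\ge3$, is a sensible addition that the paper leaves implicit.
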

\begin{proof}
In the following we shall denote by $k_e:\partial U_e\to\R$ the Radon-Nikodym derivative of $\mu_e$with respect to $\sigma$.
By H\"older inequality we have
\[
\begin{split}
\|\varphi\|^s_{L^s(U_e,\mu_e)}&=\int_{\partial U_e} \mu_e(dy)\,|\varphi(y)|^s =\int_{U_e}\, m_e(dx)\int_{\partial U_e}\, \mu_x(dy)|\varphi(y)|^s \\
&= \int_{\partial U_e}\, \sigma(dy)|\varphi(y)|^s\cdot \Bigl(\int_{U_e}m_e(dx)h(x,y)\Bigr) \\
&= \int_{\partial U_e}\, \sigma(dy)|\varphi(y)|^s\cdot k_e(y) \\
&\le \Bigl(\int_{\partial U_e} \sigma(dy)\, |\varphi (y)|^{s\frac{q}{s}}\Bigr)^{\frac{s}{q}}\cdot \Bigl(\int_{\partial U_e}\sigma(dy)|k_e(y)|^p\Bigr)^{\frac{1}{p}} \\
&=\|\varphi\|^s_{L^q(\partial U_e,\sigma)}\cdot \|k_e\|_{L^p(\partial U_e,\sigma)} \\
&\le \|k_e\|_{L^p(\partial U_e,\sigma)}\cdot \Bigl(S\cdot\E_0[\varphi]+b\cdot \|\varphi\|^2_{L^2(\partial U_e,\nu_e)}\Bigr)^{\frac{r}{2}} \\
&\le \|k_e\|_{L^p(\partial U_e,\sigma)}\cdot \Bigl(S\cdot\E_0[\varphi]+b\cdot \|\frac{d\nu_e}{d\mu_e}\|_{L^\infty(\partial U_e,\sigma)}\cdot\|\varphi\|^2_{L^2(\partial U_e,\mu_e)}\Bigr)^{\frac{s}{2}}\, .
\end{split}
\]
\end{proof}
\begin{theorem}
Suppose $(V-e)_-\in L^1(\O,dx)\cap L^p(\O,dx)$ for some $p>n/2$, $n\ge 3$
and set $r\in [1,q)$ such that $\frac{1}{p}+\frac{r}{q}=1$. Choose $r\in (2,q]$ such that $\frac{1}{p}+\frac{r}{q}=1$, where $q:=\frac{2(n-1)}{n-2}$.\\
i) Then, the Markovian semigroup $e^{-tB_0}$ is ultracontractive on $L^2(\partial U_e,\mu_e)$ and
\begin{equation}
\|e^{-tB_0}\|_{L^2\to L^\infty}\le \Bigl({\rm e}\frac{m}{4}c_1\Bigr)^{m/4}\cdot {\rm e}^{\frac{c_2}{c_1}t}\cdot t^{-m/4}\qquad t>0\, ,
\end{equation}
where $m:=\frac{2r}{r-2}$.\\
ii) If moreover $\|(V-e)_-\|_{L^1(\O,dx)}<+\infty$, then the semigroup is nuclear with
\begin{equation}
{\rm Tr}(e^{-tB_0})\le \bigl({\rm e}mc_1\bigr)^m\cdot \|(V-e)_-\|^2_{L^1(\O,dx)}\cdot {\rm e}^{\frac{c_2}{c_1}t}\cdot t^{-m}\qquad t>0\, ,
\end{equation}
iii) and the following bound hold true
\begin{equation}
N((\E_0,\F),\gamma)\le {\rm e}^{2m}\cdot \|(V-e)_-\|^2_{L^1(\O,dx)}\cdot(c_1\gamma +c_2)^m\qquad \gamma\ge 0\, .
\end{equation}
\end{theorem}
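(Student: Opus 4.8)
The plan is to run, \emph{mutatis mutandis}, the argument of Section 5.1 (Lemmas 5.2 and 5.3 and Theorem 5.4), replacing the weighted Sobolev inequality of Lemma 5.1 by the Sobolev inequality (5.2) of Lemma 5.9, and the triple $\bigl(L_e,(\D_{U_e},H^1_0(U_e,m_e)),L^2(U_e,m_e)\bigr)$ by $\bigl(B_0,(\E_0,\F),L^2(\partial U_e,\mu_e)\bigr)$. Under the standing assumptions of Section 5.2 the boundary $\partial U_e$ is smooth, hence the Dirichlet space $(\D_{U_e},H^1(U_e,m_e))$ is regular and, by Lemma 3.8, $(\E_0,\F)$ is a Dirichlet form on $L^2(\partial U_e,\mu_e)$; in particular $B_0\ge0$ is self-adjoint and $e^{-tB_0}$ is a symmetric Markovian semigroup, so the Davies--Simon theory of ultracontractivity [7], [6] applies.

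For part i), I would feed the Sobolev inequality (5.2), namely $\|\varphi\|^2_{L^r(\partial U_e,\mu_e)}\le c_1\E_0[\varphi]+c_2\|\varphi\|^2_{L^2(\partial U_e,\mu_e)}$, into [6 Thm 2.4.2] as in the proof of Lemma 5.2: Jensen's inequality applied to the probability measure $|\varphi|^2\,d\mu_e$ (for a norm one $\varphi\in\F$) gives $\int_{\partial U_e}|\varphi|^2\ln|\varphi|\,d\mu_e\le\frac{m}{4}\ln\bigl(c_1\E_0[\varphi]+c_2\bigr)$ with $m:=\frac{2r}{r-2}$, and then $\ln t\le t-1$ (after inserting a parameter inside the logarithm) together with a rescaling yields a family of logarithmic Sobolev inequalities
\[
\int_{\partial U_e}|\varphi|^2\ln|\varphi|\,d\mu_e\le\varepsilon\,\E_0[\varphi]+\beta(\varepsilon)\, ,\qquad\varepsilon>0\, ,
\]
where, because of the defect $c_2\|\varphi\|^2_{L^2}$ in (5.2), the function $\beta(\varepsilon)$ carries --- besides the term $-\frac{m}{4}\ln\varepsilon+{\rm const}$ already present in the proof of Lemma 5.2 --- an additional term linear in $\varepsilon$ and proportional to $c_2/c_1$. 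By [6 Corollary 2.2.8] one then gets $\|e^{-tB_0}\|_{L^2\to L^\infty}\le\exp\bigl(\frac{1}{t}\int_0^t\beta(\varepsilon)\,d\varepsilon\bigr)$: the linear term in $\beta$ integrates to the exponential factor ${\rm e}^{(c_2/c_1)t}$, while the logarithmic part reproduces $\bigl({\rm e}\frac{m}{4}c_1\bigr)^{m/4}t^{-m/4}$, which is (5.3).

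For parts ii) and iii), note first that since each harmonic measure $\mu_x$ is a probability measure on $\partial U_e$ one has $\mu_e(\partial U_e)=\int_{U_e}m_e(dx)\,\mu_x(\partial U_e)=m_e(U_e)=\|(V-e)_-\|_{L^1(\O,dx)}<+\infty$ under the hypothesis. Hence, exactly as in Lemma 5.3, [6 Thm 2.1.4] combined with part i) shows that $e^{-tB_0}$ is nuclear with ${\rm Tr}(e^{-tB_0})\le\mu_e(\partial U_e)^2\cdot c(t/4)^4$, where $c(t):=\|e^{-tB_0}\|_{L^2\to L^\infty}$; inserting (5.3) for $c(t/4)$ and collecting the powers of $4$ yields (5.4). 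Finally, for part iii) I would use the pointwise bound $\chi_{(-\infty,\gamma]}(x)\le{\rm e}^{-t(x-\gamma)}$ (valid for $x\in\R$, $\gamma\ge0$, $t>0$), so that by (5.4)
\[
N((\E_0,\F),\gamma)={\rm Tr}\bigl(\chi_{(-\infty,\gamma]}(B_0)\bigr)\le{\rm e}^{t\gamma}\,{\rm Tr}(e^{-tB_0})\le\bigl({\rm e}mc_1\bigr)^m\,\|(V-e)_-\|^2_{L^1(\O,dx)}\,{\rm e}^{(\gamma+c_2/c_1)t}\,t^{-m}\, ,
\]
and then minimize the right-hand side over $t>0$: the minimizer $t=m/(\gamma+c_2/c_1)$ turns ${\rm e}^{(\gamma+c_2/c_1)t}t^{-m}$ into ${\rm e}^{m}(\gamma+c_2/c_1)^m m^{-m}$, so that, using $c_1\cdot(\gamma+c_2/c_1)=c_1\gamma+c_2$, collecting the constants gives precisely (5.5).

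I do not expect a genuine obstacle here: each step merely transcribes the corresponding one of Section 5.1. The single point demanding slightly more care than there is the tracking of the defective term $c_2\|\varphi\|^2_{L^2}$ of (5.2) through the logarithmic Sobolev inequality and the ultracontractivity estimate; it is precisely this term that produces the exponential factor ${\rm e}^{(c_2/c_1)t}$ in (5.3) and (5.4), and, after the optimization in $t$, the affine modification $\gamma\mapsto\gamma+c_2/c_1$ which is then absorbed into the combination $c_1\gamma+c_2$ appearing in (5.5).
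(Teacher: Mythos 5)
Your proposal is correct and follows essentially the same route as the paper: part i) by feeding the Sobolev inequality (5.2) into the Davies–Simon logarithmic Sobolev / ultracontractivity machinery exactly as in Lemma 5.2, part ii) by the finiteness of $\mu_e(\partial U_e)=\|(V-e)_-\|_{L^1(\O,dx)}$ together with [6, Thm 2.1.4] as in Lemma 5.3, and part iii) by the elementary bound $N((\E_0,\F),\gamma)\le {\rm e}^{\gamma t}\,{\rm Tr}({\rm e}^{-tB_0})$ and optimization over $t>0$. Your explicit tracking of the defect term $c_2\|\varphi\|^2_{L^2(\partial U_e,\mu_e)}$ in (5.2) --- yielding a term linear in $\varepsilon$ in $\beta(\varepsilon)$ and hence the factor ${\rm e}^{(c_2/c_1)t}$ --- and your computation of the minimizer $t=m/(\gamma+c_2/c_1)$ producing $(c_1\gamma+c_2)^m$ fill in details the paper leaves implicit, but do not alter the argument.
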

\begin{proof}
The proof of the first statement i) follows from the Sobolev inequalities (5.2) above along exactly the same lines of the proof of Lemma 5.2. Since, by hypothesis,
\[
\begin{split}
\mu_e(\partial U_e)&=\int_{U_e}m_e(dx)\int_{\partial U_e} 1\cdot \mu_x(dy)=\int_{U_e}1\cdot m_e(dx)=\|(V-e)_-\|_{L^1(\O,dx)}
\end{split}
\]
is finite, the second statement ii) follows from i) by an application of [5 Thm 2.1.4]. Finally, the last statement iii) follows from ii) optimizing over $t>0$ the bound
\[
N((\E_0,\F),\gamma)\le {\rm e}^{\gamma t}\cdot {\rm Tr}({\rm e}^{-tB_0})\qquad t>0\, .
\]
\end{proof}

%
%

\newpage
\normalsize
\begin{center} \bf REFERENCES\end{center}

\normalsize
\begin{enumerate}

\bibitem[1]{1} W. Arendt, A.F.M. ter Elst, \newblock{Ultracontractivity and eigenvalues: Weyl's law for the Dirichlet-to-Neumann operator},
\newblock{\it   Integral Equations Operator Theory} {\bf 88} {\rm (2017)}, no. 1, 65--89.

\bibitem[2]{2} W. Arendt, R. Mazzeo, \newblock{Spectral properties of the Dirichlet-to-Neumann operator on Lipschitz domains},
\newblock{\it  Ulmer Seminare, Heft 12} {\bf 88} {\rm (2007)}, 28--38.

\bibitem[3]{3} M.S. Birman, \newblock{On the spectrum of singular boundary-value problems},
\newblock{\it  Mat. Sb. (N.S.)} {\bf 55(97)} {\rm (1961)}, no. 2, 125--174.

\bibitem[4]{4} Z.-Q. Chen, M. Fukushima, \newblock{``Symmetric Markov processes, time change, and boundary theory''},
 \newblock{London Mathematical Society Monographs Series {\bf 35}, Princeton University Press, Princeton, NJ, 2012.}.

\bibitem[5]{5} H.L. Cycon, R.G. Froese, W. Kirsch, B. Simon, \newblock{``Schr\"odinger operators with application
to quantum mechanics and global geometry''},
\newblock{Texts and Monographs in Physics. Springer Study Edition. Springer-Verlag, Berlin, 1987.}

\bibitem[6]{6} E.B. Davies, \newblock{``Heat Kernels and Spectral Theory''},
\newblock{Cambridge Tracts in Mathematics,  {\bf 143}, Cambridge University Press, 1989}.

\bibitem[7]{7} E.B. Davies, B. Simon, \newblock{Ultracontractivity and the heat kernel for Schrödinger operators and Dirichlet Laplacians},
\newblock{\it  J. Funct. Anal.} {\bf 59} {\rm (1984)}, no. 2, 335--395.

 \bibitem[8]{8} A.F.M. ter Elst, E.M. Ouhabaz, \newblock{Analysis of the heat kernel of the Dirichlet-to-Neumann operator},
\newblock{\it   J. Funct. Anal.} {\bf 267} {\rm (2014)}, no. 11, 4066--4109.

\bibitem[9]{9} L. Friedlander, \newblock{Some inequalities between Dirichlet and Neumann eigenvalues},
\newblock{\it Duke Math. J.} {\bf 116} {\rm (1991)}, no. 2, 153--160.

\bibitem[10]{10} S.G. Krantz, \newblock{Calculation and estimation of the Poisson kernel},
\newblock{\it  Math. Anal. and Appl.} {\bf 302} {\rm (2005)}, 143--148.

\bibitem[11]{11} P. Li, S.-T. Yau, \newblock{On the Schr\"odinger Equation and the Eigenvalue Problem},
\newblock{\it Comm. Math. Psys.} {\bf 88} {\rm (1983)}, 309--318.

\bibitem[12]{12} Y. Li, M. Zhu, \newblock{Sharp Sobolev Trace Inequalities on Riemannian Manifolds with Boundaries},
\newblock{\it Comm Pure Appl. Math.} {\bf 50} {\rm (1997)}, no. 5, 449--487.

\bibitem[13]{13} E.H. Lieb, \newblock{``The number of bound states of one-body Schroedinger operators and the Weyl problem''},
\newblock{\it in Geometry of the Laplace operator (Proc. Sympos. Pure Math., Univ. Hawaii, Honolulu, Hawaii, 1979), Proc. Sympos. Pure Math., XXXVI, Amer. Math. Soc., Providence, R.I.}{\rm (1980)}, 241--252.

\bibitem[14]{14} L.E. Payne, \newblock{Inequalities for eigenvalues of plates and membranes},
\newblock{\it J. Rational Mech. Anal.} {\bf 4} {\rm (1955)}, 517--529.

\bibitem[15]{15} G. Polya, \newblock{On the eigenvalues of a vibrating membranes},
\newblock{\it Proc. London Math. Soc.} {\bf 11} {\rm (1961)}, no. 3, 419--433.

\bibitem[16]{16} B. Simon, \newblock{Analysis with weak trace ideals and the number of bound states of Schr\"odinger operators},
\newblock{\it  Trans. Amer. Math. Soc.} {\bf 224} {\rm (1976)}, no. 2, 367--380.

\bibitem[17]{17} J. Schwinger, \newblock{On the bound states of a given potential},
\newblock{\it  Proc. Acad. Sci. U.S.A.} {\bf 47} {\rm (1961)}, 122--129.

\bibitem[18]{18} H. Weyl, \newblock{Das asymptotische Verteilungsgesetz der Eigenwerte linearer partieller Differentialgleichungen},
\newblock{\it  Math. Ann.} {\bf 71} {\rm (1912)}, 441--479.

\end{enumerate}
\end{document}